\documentclass[11pt]{article}
\usepackage[margin=1in]{geometry}
\usepackage{amsmath,amssymb,amsthm}
\usepackage{tikz}
\usetikzlibrary{arrows.meta,positioning,calc}
\usepackage{float}
\usepackage{mathrsfs}
\usepackage[hidelinks]{hyperref}
\hypersetup{
  pdftitle={At Most Two Infinite Blue Clusters in the CMR Representation of the Edwards--Anderson Spin Glass},
  pdfauthor={Yan Ru Pei}
}

\newtheorem{theorem}{Theorem}
\newtheorem{lemma}[theorem]{Lemma}
\newtheorem{proposition}[theorem]{Proposition}
\newtheorem{corollary}[theorem]{Corollary}
\newtheorem{definition}[theorem]{Definition}
\numberwithin{theorem}{section}
\theoremstyle{remark}
\newtheorem{remark}[theorem]{Remark}

\title{At Most Two Infinite Blue Clusters in the CMR Representation of the Edwards--Anderson Spin Glass}
\author{Yan Ru Pei}
\date{}

\begin{document}
\maketitle

\begin{abstract}
The two-replica Chayes--Machta--Redner (CMR) representation is one of the main proposed geometric signatures of spin-glass order in the short-range Edwards--Anderson model. Mean-field arguments and recent numerics suggest that the low-temperature phase should exhibit two macroscopic blue clusters carrying opposite overlap signs. We prove a rigorous structural constraint in this direction. For any subsequential local weak limit of the standard periodic-torus joint laws on disorder, two spin replicas, and CMR bond variables, the blue subgraph contains at most two infinite connected components; if two exist, then they lie in a common infinite grey cluster and belong to opposite overlap-parity classes. The main obstacle is that the labelled blue geometry does not permit unrestricted insertions across overlap classes, and no positive-association input is available, so the usual Burton--Keane and random-cluster arguments do not apply directly. We isolate an abstract multicolour Burton--Keane proposition based on finite-box label-class coalescence and verify its hypothesis for CMR blue bonds by resampling the full joint measure. As auxiliary input, we establish finite energy and a percolation transition for the grey subgraph via local resampling of the disorder and a parity-based Peierls estimate. These results do not prove the existence of infinite blue clusters or a spin-glass phase transition, but they give a rigorous upper bound compatible with the two-cluster picture for short-range spin glasses.
\end{abstract}

\section{Introduction and main result}\label{sec:intro}

\subsection{Motivation}

Spin glasses are disordered magnetic systems in which competing interactions can prevent the emergence of a single preferred ordered state.
A central problem is to understand whether the low-temperature phase has genuine long-range order, and if so how that order should be detected.
Unlike ferromagnets, where order is visible through magnetization and through Fortuin--Kasteleyn percolation, spin glasses have no fixed direction of magnetization.
Their standard order parameter is instead the overlap between two independently sampled equilibrium configurations.

This makes graphical representations especially useful: they turn questions about overlap into questions about random geometry.
The two-replica Chayes--Machta--Redner (CMR) representation~\cite{CMR1998} associates to a pair of spin configurations a bond process with blue bonds, red bonds, and closed bonds.
Blue clusters are the natural geometric objects associated with regions where both replicas simultaneously satisfy the disorder.
Machta, Newman, and Stein~\cite{MNS2007} argued that the low-temperature phase should be reflected in the emergence of two macroscopic blue clusters of opposite overlap sign.
Recent simulations in two and three dimensions by M\"unster and Weigel~\cite{MW2023,MW2025} support this picture.

For short-range Edwards--Anderson models, even the structural part of this picture is not automatic.
Frustration imposes nonlocal cycle constraints, while the labelled blue geometry forbids unrestricted insertions across overlap classes and provides no positive-association input.
Thus, before one can identify the spin-glass order parameter with blue-cluster densities, one must first answer a more basic geometric question: how many infinite blue clusters can coexist?

This note proves that the answer is at most two.
More precisely, for any joint measure obtained from the periodic-torus construction specified below, there is at most one infinite blue cluster in each overlap-parity class, and hence at most two infinite blue clusters in total.
If two exist, they lie in the same infinite grey cluster and belong to opposite path-parity classes.
The proof works in the full joint measure over disorder, spins, and bonds. As auxiliary input, we establish finite energy and a percolation transition for the grey subgraph (blue$\,\cup\,$red) by combining local resampling of the couplings with a parity-based Peierls argument.

We work with the Edwards--Anderson (EA) Ising spin glass on the hypercubic lattice $\mathbb{Z}^d$ ($d \geq 2$), with vertex set $V$ and nearest-neighbor edge set $E$. The disorder is a collection of random couplings $J = (J_e)_{e \in E}$. For a finite $\Lambda\subset V$, boundary condition $\xi\in\{\pm1\}^{\Lambda^c}$, and inverse temperature $\beta>0$, the finite-volume EA kernel is
\begin{equation}\label{eq:gibbs}
  \mu_{J,\beta,\Lambda}^{\xi}(\sigma_\Lambda)
  \;\propto\;
  \exp\Biggl(\beta
    \sum_{\substack{e=\{x,y\}\in E\\ e\cap\Lambda\neq\emptyset}}
      J_e\,\sigma_x^\xi\sigma_y^\xi\Biggr),
\end{equation}
where $\sigma^\xi$ agrees with $\sigma_\Lambda$ on $\Lambda$ and with $\xi$ on $\Lambda^c$.
An infinite-volume EA Gibbs measure is a probability law whose finite-box conditional distributions are these kernels.
Positive couplings favor alignment, negative couplings favor anti-alignment, and loops carrying an odd number of negative couplings are frustrated.

\subsection{The CMR Edwards--Sokal coupling on the Ising spin glass}

The disorder distribution $\mathscr{P}$ (the common law of a single coupling $J_e$) determines both the frustration structure and the analytical tools available.

\begin{definition}[Disorder conditions]\label{def:disorder}
We assume throughout:
\begin{enumerate}
\item[\textup{(D1)}] \textbf{Independence.} The couplings $(J_e)_{e \in E}$ are i.i.d.
\item[\textup{(D2)}] \textbf{Two-sided.} $\mathscr{P}(J > 0) > 0$, $\;\mathscr{P}(J < 0) > 0$, and $\;\mathscr{P}(\{0\}) = 0$.
\end{enumerate}
\end{definition}

\begin{remark}
The \textbf{bimodal} ($\pm J_0$) and \textbf{Gaussian} ($\mathscr{P} = \mathcal{N}(0, \Delta^2)$) distributions both satisfy~(D1)--(D2).
The two-sidedness condition ensures that frustration occurs with positive density and that the local resampling of couplings (Section~\ref{sec:bayesian}) can access either coupling sign.
One-sided distributions ($\mathscr{P}(J > 0) = 1$, the ferromagnet) are excluded by~(D2); their two-replica CMR geometry is not treated here.
\end{remark}

\begin{definition}[CMR Edwards--Sokal coupling]\label{def:CMR}
The CMR representation~\cite{CMR1998} is an Edwards--Sokal coupling for two replicas $(\sigma, \tau)$ sampled independently conditional on the same disorder $J$ in finite volume.
A bond $e = \{x,y\}$ is \emph{satisfied} by replica~$\sigma$ if $J_e \sigma_x \sigma_y > 0$.
Given $J$ and $\beta$, set $r_e := e^{-2\beta|J_e|}$.
On a finite graph $G=(V_G,E_G)$, after the spin-independent normalization that makes a satisfied two-replica edge have weight~$1$, the joint spin--bond weight is
% Important: do not replace this local table by a single Bernoulli
% factor for all non-closed bonds. Red and blue have different normalized
% weights: blue 1-r^2, red r(1-r), closed r^2. The column sums must recover
% the two-replica edge weights 1,r,r,r^2.
\begin{equation}\label{eq:ES}
  X_G(\sigma, \tau, \eta; J) = \prod_{e \in E_G} w_e(\sigma, \tau, \eta_e; J_e),
\end{equation}
where the local weights are

\begin{center}
\renewcommand{\arraystretch}{1.1}
\begin{tabular}{lcccc}
& \textbf{Both satisfy} & \textbf{$\sigma$ only} & \textbf{$\tau$ only} & \textbf{Neither satisfies} \\
& $(+,+)$ & $(+,-)$ & $(-,+)$ & $(-,-)$ \\
\hline
\textup{blue} & $1-r_e^2$ & $0$ & $0$ & $0$ \\
\textup{red} & $0$ & $r_e(1-r_e)$ & $r_e(1-r_e)$ & $0$ \\
\textup{closed} & $r_e^2$ & $r_e^2$ & $r_e^2$ & $r_e^2$ \\
\end{tabular}
\end{center}

Thus blue bonds are allowed only when both replicas satisfy $e$, red bonds only when exactly one replica satisfies $e$, and closed bonds are always allowed.
The column sums are $1,r_e,r_e,r_e^2$, which recover the normalized two-replica Edwards--Anderson edge weights after summing over $\eta_e$.
Consequently, conditional on a fixed $J$ in every finite volume, summing over all bond configurations gives the product of two independent Edwards--Anderson Gibbs measures as the spin marginal: the CMR variables augment rather than alter the two-replica EA model.
In infinite volume, normalizing the applicable column of the same table specifies the local CMR bond kernel conditional on $(J,\sigma,\tau)$; it is not an infinite-product density.

\end{definition}

\begin{remark}[Marginal measure]\label{rem:weights}
In finite volume, the \textbf{CMR bond marginal} is obtained by integrating out the spins $(\sigma, \tau)$. Its unnormalized weights are determined by bond fugacities and \textbf{geometric factors} from the cluster weight formula $V(\eta) = F(\eta)\,D(\eta)\,2^{K(\eta)}\,2^{C_b(\eta)}$~\cite[eq.~(9)]{CMR1998theory}, where $K(\eta)$ is the number of grey clusters and $C_b(\eta)$ is the number of blue clusters. The merge-case weights needed later are recorded in Remark~\ref{rem:merge-weights}.
\end{remark}

\subsection{Bond subgraphs and overlap structure}

Define the \textbf{local overlap} $q_i := \sigma_i \tau_i \in \{\pm 1\}$.

\begin{definition}[Bond subgraphs]\label{def:subgraphs}
Given a CMR bond configuration $\eta \in \{\text{blue}, \text{red}, \text{closed}\}^E$:
\begin{itemize}
\item The \textbf{blue-bond subgraph} $G_{\mathrm{blue}} = (V, E_{\mathrm{blue}})$ is the spanning subgraph of $\mathbb{Z}^d$ with edge set $E_{\mathrm{blue}} := \{e \in E : \eta_e = \text{blue}\}$.
\item The \textbf{grey-bond subgraph} $G_{\mathrm{grey}} = (V, E_{\mathrm{grey}})$ is the spanning subgraph with edge set $E_{\mathrm{grey}} := \{e \in E : \eta_e \in \{\text{blue}, \text{red}\}\}$.
\end{itemize}
Connected components of $G_{\mathrm{blue}}$ are called \textbf{blue clusters}; connected components of $G_{\mathrm{grey}}$ are called \textbf{grey clusters}.
Each grey cluster decomposes into at most two nonempty overlap parity classes (Definition~\ref{def:parity}), and each blue cluster is contained in a single parity class.
\end{definition}

\begin{remark}[Overlap alignment from the Edwards--Sokal coupling]
The allowance rules in~\eqref{eq:ES} enforce rigid overlap constraints within the bond subgraphs: along a \textbf{blue bond} (double satisfaction), both replicas satisfy the bond, so $\sigma_x \sigma_y = \tau_x \tau_y = \mathrm{sign}(J_e)$, which forces $q_x = \sigma_x \tau_x = \sigma_y \tau_y = q_y$ and hence makes the overlap constant on each blue cluster; along a \textbf{red bond} (single satisfaction), exactly one replica satisfies the bond, so $\sigma_x \sigma_y \neq \tau_x \tau_y$, which forces $q_x = -q_y$ and hence makes the overlap flip across every red bond.
\end{remark}

The overlap alignment imposes two global constraints on cycles in the bond configuration:
\begin{enumerate}
\item[(C1)] Every cycle contains an \textbf{even} number of red bonds.
\item[(C2)] No frustrated cycle (one with $\prod_{e \in \gamma} \mathrm{sign}(J_e) = -1$) consists entirely of blue bonds.
\end{enumerate}

\begin{definition}[Overlap parity class]\label{def:parity}
Under the joint measure on $(J, \sigma, \tau, \eta)$, each grey cluster decomposes into at most two nonempty \textbf{overlap parity classes}: the set of vertices with overlap $q_i = +1$ and the set with $q_i = -1$, either of which may be empty.
Every blue cluster is contained in a single parity class (since $q$ is constant on blue clusters), and sites across a red bond have opposite overlap.
\end{definition}

\begin{remark}[Parity classes in the bond marginal]\label{rem:parity-marginal}
The overlap parity class is also well-defined from the bond configuration alone: two vertices in the same grey cluster are in the \textbf{same} parity class if every grey path between them crosses an \textbf{even} number of red bonds, and in \textbf{opposite} classes if every such path crosses an odd number.
By~(C1), the parity of red bonds along any path is path-independent, so this is consistent and partitions each grey cluster into at most two nonempty classes---which, under the spin lift, coincide with the nonempty $q = +1$ and $q = -1$ sets of Definition~\ref{def:parity}.
Thus we use \emph{label class} for the abstract labelled graph, \emph{overlap-sign class} for the sets $q=\pm1$ in the joint CMR representation, and \emph{path-parity class} for their bond-measurable description inside a grey cluster.
\end{remark}

\subsection{Main result}

The main result of this note is the following.
Let $\mathcal{J} := \operatorname{supp}(\mathscr{P})$ denote the single-edge coupling space.
On the torus $\mathbb{T}_L^d$, let
\begin{equation}\label{eq:torus-joint-law}
  \mathbf{P}_{\beta,L}^{\mathrm{per}}(\mathrm{d}J,\mathrm{d}\sigma,\mathrm{d}\tau,\mathrm{d}\eta)
  := \nu_L(\mathrm{d}J)\,
     \mu_{J,\beta,L}^{\mathrm{per}}(\mathrm{d}\sigma)\,
     \mu_{J,\beta,L}^{\mathrm{per}}(\mathrm{d}\tau)\,
     K_{J,\sigma,\tau,L}(\mathrm{d}\eta),
\end{equation}
where $\nu_L=\mathscr{P}^{\otimes E(\mathbb{T}_L^d)}$ and $K_{J,\sigma,\tau,L}$ is the CMR bond kernel from Definition~\ref{def:CMR}.
We call any subsequential local weak limit of these laws, with the two replicas and bonds included before taking $L\to\infty$, an \emph{admissible joint measure}.
For $\mathscr{P}^{\otimes E}$-almost every $J$, the conditional two-replica spin law is Gibbs for the noninteracting two-replica EA specification, whose finite-box DLR kernel is the product of the two one-replica kernels.
This local product specification does not imply that the infinite-volume two-replica law is itself a product measure; the replicas may remain correlated through the Gibbs state selected in the limit.
Each individual replica marginal is nevertheless an EA Gibbs measure; see~\cite[Proposition~4.12, Remark~4.13, and Lemma~B.4]{Newman1997}.

\begin{theorem}\label{thm:main}
Let $d \geq 2$. For each $\beta > 0$, let $\mathbf{P}_\beta$ be an admissible joint measure on
\[
  \Omega = \mathcal{J}^E \times \{\pm 1\}^V \times \{\pm 1\}^V \times \{\text{blue, red, closed}\}^E
\]
as constructed above.
\begin{enumerate}
\item[\textup{(a)}] \textbf{Grey percolation transition.}
  There exist $0 < \beta_{\mathrm{sub}} \leq \beta_{\mathrm{grey}} < \infty$ (depending on $d$ and $\mathscr{P}$) such that:
  for $\beta < \beta_{\mathrm{sub}}$, $G_{\mathrm{grey}}$ contains no infinite grey cluster;
  for $\beta > \beta_{\mathrm{grey}}$, $G_{\mathrm{grey}}$ contains exactly one infinite grey cluster;
  both statements holding $\mathbf{P}_\beta$-a.s.
\item[\textup{(b)}] \textbf{Blue-cluster bound.}
  For every $\beta > 0$, $\mathbf{P}_\beta$-a.s., $G_{\mathrm{blue}}$ contains at most two infinite blue clusters.
  If two distinct infinite blue clusters exist, then they are contained in the same grey cluster, and every grey path between them crosses an odd number of red bonds.
\end{enumerate}
\end{theorem}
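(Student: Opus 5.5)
The proof splits into (a) and (b). Both rest on three ingredients: translation invariance of $\mathbf{P}_\beta$, ergodicity or a reduction to ergodic components, and a finite-energy property. Translation invariance is inherited from the periodic torus laws, since the torus law is invariant under lattice shifts and local weak limits preserve this. For the rest we decompose into ergodic components and prove the statements for each; the counts of infinite clusters are translation-invariant events, so this suffices.

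\textbf{Finite energy.} This is the auxiliary step I would establish first. Fix a finite box $B$ and condition on everything outside it. Resample $J_B$ from $\mathscr{P}^{\otimes E(B)}$, reweighted by the finite-volume Gibbs factor. Condition (D2) lets us choose coupling signs on $B$ so that any prescribed spin pair $(\sigma_B,\tau_B)$ satisfies any prescribed edge in either or both replicas. Then the table in Definition~\ref{def:CMR} gives each allowed bond state a weight bounded below by a positive function of $\beta$ and $|J_e|$. Together these yield the following: for every cylinder event on $B$ that is compatible with the constraints (C1)--(C2) relative to the outside data, the conditional probability is positive a.s. This is the ``local resampling of the disorder''. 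It has to be done on the torus with bounds uniform in $L$, then passed to the limit.

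\textbf{Part (a).} For small $\beta$, $r_e=e^{-2\beta|J_e|}$ is close to $1$, so the probability that an edge is grey, namely $1-r_e^2$ or $2r_e(1-r_e)$, is uniformly small given everything else. The grey subgraph is then dominated by subcritical Bernoulli percolation, and a path-counting bound gives no infinite grey cluster for $\beta<\beta_{\mathrm{sub}}$. For large $\beta$, use a Peierls estimate. A finite closed dual surface separating the origin has every edge closed, which costs $r_e^2$ per edge in the numerator. The parity argument pairs each such configuration with one obtained by flipping spins (or the overlap parity) inside the surface. This gains a factor $\prod r_e^{-2}$ and shows that closed contours have exponentially small probability in their size. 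Hence an infinite grey cluster exists for $\beta>\beta_{\mathrm{grey}}$. Uniqueness follows from the standard Burton--Keane argument applied to the grey subgraph. Grey edges carry no label obstruction, so finite energy lets us open a box and merge clusters, creating trifurcations. Two infinite grey clusters would then merge with positive probability, and the count of infinite clusters is a.s. constant under ergodicity, which rules out any finite number $k\ge2$; the trifurcation-density bound $|\partial B|$ versus $|B|$ rules out $k=\infty$.

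\textbf{Part (b)} is the main obstacle. Blue clusters carry a parity label, and two blue clusters of opposite overlap can never be joined by blue bonds inside a box, so Burton--Keane cannot be applied naively. I would prove an abstract multicolour Burton--Keane proposition for a translation-invariant random labelled graph with labels in $\{+,-\}$ and finite energy. Its hypothesis is \emph{label-class coalescence}: whenever two infinite clusters of the same label meet a box $B$, with positive conditional probability one can modify the configuration inside a larger box so that they become connected, while keeping all infinite clusters infinite. Given this, for each label the number of infinite clusters of that label is a.s. $0$, $1$ or $\infty$, and the trifurcation argument kills $\infty$ within each class. So there is at most one infinite blue cluster per overlap sign, at most two in total. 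To verify coalescence for CMR, take two same-parity infinite blue clusters touching $B$. Resample $(J,\sigma,\tau,\eta)$ inside $B$: set both replicas so that $q$ is constant along a lattice path joining them, choose $J$ signs on that path so that both replicas satisfy every edge, and make those bonds blue. Condition (C2) holds because we may choose $J$ freely, so no frustrated blue cycle is created. The delicate point is keeping the boundary compatible with the outside spins; this is why the box must be enlarged and the spins on a neighbourhood shell fixed.

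For the final claims, suppose two infinite blue clusters exist. They have opposite parity by the above. Grey uniqueness from (a) is only available for large $\beta$, so for general $\beta$ I would instead use the same kind of resampling: joining them through a red bond is a coalescence of grey clusters, and the grey Burton--Keane argument shows that the grey clusters containing infinite blue clusters must coincide. Opposite overlap together with Remark~\ref{rem:parity-marginal} then gives the odd red-bond parity along every grey path between them.
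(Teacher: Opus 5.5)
Your architecture matches the paper's: disorder resampling for grey finite energy, Burton--Keane for grey uniqueness, domination for small $\beta$, Peierls for large $\beta$, a label-class Burton--Keane for blue clusters verified by resampling $(J,\sigma,\tau,\eta)$ in a box, and grey uniqueness to put two infinite blue clusters into one grey cluster. Grey uniqueness is not a large-$\beta$ statement: finite energy holds at every $\beta$, so Corollary~\ref{cor:grey-unique} is exactly the argument in your last paragraph. Two steps of part (a), however, fail as written.

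First, the Peierls gain is wrong. In the joint weight a closed bond has the spin-independent weight $r_e^2$, so flipping spins inside an all-closed $\gamma$ gains nothing by itself. The gain must come from then opening contour bonds, which is possible only where some replica satisfies the bond. Flipping a replica inside $\gamma$ only exchanges its satisfied and unsatisfied contour edges, so in general no flip makes all of $\gamma$ doubly satisfied, and $\prod_{e\in\gamma}r_e^{-2}$ is not available. What does hold is the following. Let $n_e$ be the number of replicas satisfying $e=\{x,y\}$, $T=\sum_{e\in\gamma}|J_e|$, $H_\sigma=\sum_{e\in\gamma}J_e\sigma_x\sigma_y$, and $H_\tau$ likewise. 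Then
\[
  \mathbf{P}_\beta(\gamma\equiv 0\mid J,\sigma,\tau)=\prod_{e\in\gamma}r_e^{n_e}=e^{-2\beta T-\beta(H_\sigma+H_\tau)}.
\]
By the two-replica DLR property at fixed $J$, the four flips of $\sigma$ and/or $\tau$ inside $\gamma$ have relative weights $e^{\beta(\pm H_\sigma\pm H_\tau)}$. Hence the orbit average is $e^{-2\beta T}/(\cosh(\beta H_\sigma)\cosh(\beta H_\tau))\le\prod_{e\in\gamma}r_e$, and averaging over $\nu$ gives $\mathbf{P}_\beta(\gamma\equiv0)\le(\mathbb{E}\,e^{-2\beta|J|})^{|\gamma|}$.

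Once corrected, your spin-flip route is a genuine alternative. It avoids the spin-integrated cluster weights and the balance partition of Lemma~\ref{lem:contour-partition}, and it is sharper than the paper's per-edge factor $(2r_e/(1+r_e))^{1/2}$: for $\pm J$ in $d=2$ the threshold drops from $\tfrac{1}{2}\ln 17$ to $\tfrac{1}{2}\ln 3$.

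Second, ``uniformly small given everything else'' is false for unbounded disorder such as Gaussian, where $r_e$ is close to $0$ on edges with large $|J_e|$ at every $\beta>0$. The paper truncates at $M$ and uses that the $J$-marginal is i.i.d., so $\mathbf{1}[|J_e|>M\text{ or }U_e\le 1-e^{-4\beta M}]$ is a dominating i.i.d.\ Bernoulli field.

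In part (b), your abstract hypothesis is weaker than what your argument uses.

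For finite $k\ge2$, merging two clusters while keeping infinite ones infinite need not lower the count. The box modification can disconnect infinite exterior arms that were joined only through the box. The paper's $\mathsf{Coal}_a(\Lambda)$ instead requires \emph{all} infinite exterior label-$a$ components meeting $\partial^+\Lambda$ to be joined, and the full filling $B_s(\Lambda)$ delivers this. No enlarged box or frozen shell is needed: with $(\sigma_i,\tau_i)=(+1,s)$ in $\Lambda$ and $J_e\sigma_j>0$ on boundary edges, such an edge is doubly satisfied exactly when $q_j=s$.

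For $k=\infty$, trifurcation counting needs an actual trifurcation vertex, which joining along a path does not produce. Your cylinder-positivity claim does let you build one: three disjoint blue arms to a center, with every other edge of $E_\Lambda$ closed. So per-label Burton--Keane can be completed this way, but the abstract statement must then assume that stronger local modification. The paper needs only coalescence, because joining three infinite arms yields a component with at least three ends, contradicting~\cite[Proposition~3.3]{HH2023}.
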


\begin{remark}
Part~(a) establishes a grey-bond percolation transition but does not imply the existence of infinite blue clusters or a spin glass phase transition---the infinite grey cluster could be sustained entirely by red bonds.
Part~(b) is a structural constraint that holds at all temperatures and is purely a statement about the bond configuration: inside each grey cluster, the path-parity classes are determined by~(C1) without reference to the spins.
If two infinite blue clusters exist, Corollary~\ref{cor:grey-unique} forces them into the same infinite grey cluster, where the two path-parity classes correspond under the spin lift to overlap signs $q = +1$ and $q = -1$.
Thus part~(b) can equivalently be read as ``at most one infinite blue cluster per overlap sign inside the infinite grey cluster.''
\end{remark}

\paragraph{Ergodic reduction.}
Fix $\beta > 0$.
For the proof of part~\textup{(b)}, we may pass to an ergodic component at this fixed temperature, since the event in part~\textup{(b)} is translation-invariant and the conclusion is preserved under convex mixtures.
The translation-invariant sigma-field is tail-measurable modulo $\mathbf{P}_\beta$; see~\cite[Lemma~6.67]{FV2017}.
It is consequently contained, modulo null sets, in every finite-volume exterior sigma-field.
By the same conditional-expectation argument used for tail-conditioned Gibbs measures in~\cite[Proposition~6.69]{FV2017}, disintegrating over it therefore leaves the local spin DLR and CMR bond kernels unchanged.
Likewise, for every finite edge set $S$, the full tail sigma-field is contained in the completion of $\mathcal{T}_S'$ from Proposition~\ref{prop:bayes}, so the conditional law of $J_S$ given $\mathcal{T}_S'$ and hence its equivalence to $\mathscr{P}^{\otimes S}$ are unchanged in almost every ergodic component.
Finally, each component retains disorder marginal $\mathscr{P}^{\otimes E}$ because the i.i.d.\ disorder law is itself translation-ergodic.
Thus in Section~\ref{sec:bk} and its consequences we write $\mathbf{P}_\beta$ for an ergodic component; the general case follows by integrating over components.
\paragraph{Notation for measures.}
Throughout, $\mathscr{P}$ denotes the common distribution of a single coupling $J_e$, and $\nu = \bigotimes_e \mathscr{P}$ is the i.i.d.\ product measure on the full disorder $J = (J_e)_{e \in E}$.
The \textbf{joint measure} $\mathbf{P}_\beta$ on $(J, \sigma, \tau, \eta)$ is defined above; its conditional $\mathbf{P}_{\beta,J} := \mathbf{P}_\beta(\,\cdot \mid J)$ is the quenched measure on $(\sigma, \tau, \eta)$ at fixed disorder.
For the selected admissible law, write
\[
  \rho_{\beta,J} := \mathbf{P}_{\beta,J}\circ(\sigma,\tau)^{-1},
  \qquad
  \zeta_{\beta,J} := \mathbf{P}_{\beta,J}\circ\eta^{-1}
\]
for its two-replica spin marginal and CMR bond marginal, respectively.
This notation records that these marginals belong to the selected admissible limit; neither a unique infinite-volume EA state nor a unique CMR bond marginal is assumed.

\paragraph{Proof strategy.}
The proof has two largely separate tracks; the dependency diagram below summarizes the main inputs used in each.

\smallskip\noindent
\textit{Grey percolation transition (part~(a), Section~\ref{sec:energy}).}
The grey (blue$\,\cup\,$red) bond process has full finite energy: a case analysis of the conditional bond probabilities shows that every bond can be both inserted and deleted with positive probability under the joint measure $\mathbf{P}_\beta$.
The local disorder-resampling input is needed only in the same-overlap endpoint case, where one sign of the coupling can force the bond closed; the analogous single-edge partition-function comparison in a frustrated FK model appears in De~Santis and Gandolfi~\cite[Lemma~3.1]{DSG1999}.
Uniqueness of the infinite grey cluster then follows from the standard Burton--Keane argument~\cite{BK1989}.
Existence for large $\beta$ uses the Peierls contour argument of De~Santis and Gandolfi~\cite{DSG1999}, adapted to the two-replica setting: we replace their frustration-based $\mathbb{Z}_2$ partition of the contour with a red-bond parity partition (via the balance theorem for signed graphs), which makes the frustration constraint~(C2) irrelevant and yields a coupling-by-coupling contour estimate. This gives quenched existence when the couplings are bounded away from zero and annealed existence for general laws satisfying~\textup{(D1)--(D2)}.
Absence for small $\beta$ follows from a truncation/direct-domination argument: strong couplings are rare by~\textup{(D1)}, while weak edges have uniformly small grey-open probability.

\smallskip\noindent
\textit{Blue-cluster uniqueness per parity class (part~(b), Section~\ref{sec:bk}).}
We first prove an abstract multicolour Burton--Keane proposition for compatible labelled subgraphs satisfying \textbf{label-class coalescence}: for each fixed label, all incident infinite exterior components can be joined through a finite box with positive conditional probability.
This is a label-restricted analogue of the merge tolerance of Halberstam and Hutchcroft~\cite[Lemma~3.4]{HH2023}, weaker in what it can merge because it says nothing about connectivity classes carrying different labels.
For CMR blue bonds the labels are the overlap values $q=\pm1$.
The labelled blue subgraph does not admit unrestricted insertions across these classes, but a stronger model-specific box-filling event can be produced by resampling $(J,\sigma,\tau,\eta)$ inside the box.
The end-counting obstruction is the mass-transport theorem used by Halberstam and Hutchcroft~\cite[Proposition~3.3]{HH2023}: every component of a translation-invariant random subgraph of $\mathbb{Z}^d$ has at most two ends almost surely.
Applying the same argument separately for $q=+1$ and $q=-1$ yields at most one infinite blue cluster per parity class, hence at most two in total.

\begin{figure}[H]
\centering
\resizebox{\linewidth}{!}{%
\begin{tikzpicture}[
    node distance=1.0cm and 0.9cm,
    box/.style={draw, rounded corners, text width=3.2cm, align=center,
                font=\footnotesize, minimum height=0.8cm, inner sep=3pt},
    track/.style={font=\footnotesize\bfseries},
    arr/.style={-{Stealth[length=5pt]}, thick},
  ]

  \node[box, text width=3.8cm] (joint) {Fixed $\beta$ and admissible\\torus-limit measure $\mathbf{P}_\beta$ on\\$(J, \sigma, \tau, \eta)$};

  \node[box, below=1.1cm of joint, text width=3.5cm] (bayes) {Local resampling of $J_S$\\(Prop.~\ref{prop:bayes})\\{\scriptsize [uniform torus comparison]}};

  \node[track, below left=1.0cm and 1.8cm of bayes] (greytitle) {Grey track};
  \node[track, below right=1.0cm and 1.8cm of bayes] (bluetitle) {Blue track};

  \node[box, below=0.4cm of greytitle] (greyfe) {Grey finite energy\\{\scriptsize [insertion + deletion]}};
  \node[box, below=1.0cm of greyfe] (greyuniq) {Grey-cluster uniqueness\\{\scriptsize [Burton--Keane]}};
  \node[box, below left=1.0cm and 0.2cm of greyuniq] (greysub) {No infinite grey cluster\\for small $\beta$\\{\scriptsize [direct domination]}};
  \node[box, below right=1.0cm and 0.2cm of greyuniq] (greysup) {Infinite grey cluster\\for large $\beta$\\{\scriptsize [parity Peierls]}};
  \node[box, below=1.2cm of $(greysub)!0.5!(greysup)$] (greycombined)
    {\textbf{(a)} Exactly one infinite grey\\cluster for $\beta > \beta_{\mathrm{grey}}$;\\none for $\beta < \beta_{\mathrm{sub}}$};

  \node[box, below=0.4cm of bluetitle] (boxresample) {CMR box filling\\(Lemma~\ref{lem:resample})\\{\scriptsize [join same-label arms]}};
  \node[box, below=2.8cm of boxresample] (bluebk) {Label-class BK proposition\\{\scriptsize [coalescence + $\leq 2$ ends]}};
  \node[box, text width=2.8cm, above right=0.35cm and 0.35cm of bluebk] (twoends)
    {At most two ends for\\translation-invariant subgraphs\\{\scriptsize [mass transport]}};
  \node[box, below=1.2cm of bluebk] (bluecombined)
    {\textbf{(b)} At most one infinite blue cluster\\per parity class, hence at most two total};

  \draw[arr] (joint) -- (bayes);
  \draw[arr] (bayes) -- (greyfe);
  \draw[arr] (bayes) -- (boxresample);
  \draw[arr] (greyfe) -- (greyuniq);
  \draw[arr] (greyuniq) -- (greycombined);
  \draw[arr] (greysub) -- (greycombined);
  \draw[arr] (greysup) -- (greycombined);
  \draw[arr] (boxresample) -- (bluebk);
  \draw[arr] (twoends.south west) -- (bluebk.north east);
  \draw[arr] (bluebk) -- (bluecombined);

\end{tikzpicture}
}
\end{figure}

%% Local disorder-resampling section
%% \input from blue-bond-uniqueness.tex between intro and grey-cluster

\section{Local resampling of couplings}\label{sec:bayesian}

In the ferromagnetic FK/Edwards--Sokal representation ($J_e > 0$ for all $e$), the bond marginal at fixed $J$ has full finite energy: every bond can be inserted or deleted with positive conditional probability.
For spin glasses, this fails at fixed $J$---the frustration constraint~(C2) can force a bond closed with probability one---but it is recovered in the \textbf{joint measure} $\mathbf{P}_\beta$ on $(J, \sigma, \tau, \eta)$ by treating the couplings as variables to be resampled.
Conditioning on the spins tilts the product disorder law by a local Radon--Nikodym factor; in finite volume, its normalization can be expressed as a cavity partition-function ratio.

This section isolates the only disorder-resampling input needed later: after freezing the spins everywhere and the bonds outside a finite edge set, the conditional law of the couplings on that edge set still has strictly positive density with respect to the i.i.d.\ prior.
The result (Proposition~\ref{prop:bayes}) is invoked in the grey finite-energy argument and in the CMR box-filling lemma that verifies label-class coalescence.

\subsection{Conditional law of finitely many couplings}\label{sec:fv-conditional}

Let $S \subset E$ be finite, and let
\[
  \mathcal{T}_S' := \sigma\bigl(J_{S^c}, \sigma, \tau, \eta_{S^c}\bigr),
  \qquad S^c := E \setminus S,
\]
where $\eta_{S^c} = (\eta_e)_{e \notin S}$.

\begin{proposition}[Local equivalence of coupling conditionals]\label{prop:bayes}
Let $S \subset E$ be finite.
Under $\mathbf{P}_\beta$, the regular conditional law of $J_S$ given $\mathcal{T}_S'$ is absolutely continuous with respect to $\bigotimes_{e \in S} \mathscr{P}$, and its density is $\mathbf{P}_\beta$-a.s.\ strictly positive.
Consequently, for every measurable $A \subset \mathrm{supp}(\mathscr{P})^{|S|}$ with $\bigotimes_{e \in S} \mathscr{P}(A) > 0$,
\begin{equation}\label{eq:bayes-positive}
  \mathbf{P}_\beta\bigl(J_S \in A \mid \mathcal{T}_S'\bigr)
  \;>\; 0.
\end{equation}
In particular, by~\textup{(D2)} couplings of either sign can be sampled with positive probability.
\end{proposition}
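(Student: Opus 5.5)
Work first on the torus $\mathbb{T}_L^d$ with $L$ large enough that $S$ embeds, and compute the conditional law of $J_S$ given $(J_{S^c},\sigma,\tau,\eta_{S^c})$ exactly from \eqref{eq:torus-joint-law}, with the bonds $\eta_S$ integrated out. Summing the CMR table over $\eta_e$ returns the column sums, so the joint density of $(J,\sigma,\tau,\eta_{S^c})$ with respect to $\nu_L\otimes$ counting measure factorizes as
\[
\prod_{e}\mathscr{P}(\mathrm{d}J_e)\;
\frac{1}{Z_L(J)^2}\prod_{e\notin S} w_e(\sigma,\tau,\eta_e;J_e)\prod_{e\in S} c_e(\sigma,\tau;J_e)\cdot \prod_{e} e^{\beta|J_e|}\cdot e^{\beta |J_e|},
\]
where $c_e\in\{1,r_e,r_e,r_e^2\}$ is the relevant column sum and $Z_L(J)$ is the one-replica EA partition function (with the normalization absorbed consistently). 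Hence the conditional density of $J_S$ with respect to $\mathscr{P}^{\otimes S}$ is
\[
g_L(J_S)=\frac{\prod_{e\in S}h_e(J_e)\,Z_L(J_S,J_{S^c})^{-2}}{\int \prod_{e\in S}h_e(J'_e)\,Z_L(J'_S,J_{S^c})^{-2}\,\mathscr{P}^{\otimes S}(\mathrm{d}J'_S)},
\]
with $h_e(J_e)=e^{2\beta|J_e|}c_e(\sigma,\tau;J_e)\in[1,e^{2\beta|J_e|}]$ (it equals $e^{2\beta J_e\sigma_x\sigma_y}$-type products, hence strictly positive). The key uniform estimate is the single-edge partition-function comparison, as in De~Santis--Gandolfi: changing $J_e$ to $J_e'$ multiplies $Z_L$ by a factor in $[e^{-\beta|J_e-J'_e|},e^{\beta|J_e-J'_e|}]$, uniformly in $L$ and $J_{S^c}$. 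So the ratio $g_L(J_S)/g_L(J'_S)$ is bounded above and below by $\exp\bigl(\pm C\beta\sum_{e\in S}(|J_e|+|J'_e|)\bigr)$, independently of $L$ and of the conditioning variables.

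Next I pass to the limit. I would phrase the finite-volume statement in the weak form: for every bounded local $F$ of $(J_{S^c},\sigma,\tau,\eta_{S^c})$ and bounded measurable $\phi\ge 0$ of $J_S$,
\[
\mathbf{E}_L[F\,\phi(J_S)] \ge \mathbf{E}_L\Bigl[F\cdot \int \phi\, \Theta_{M}\,\mathrm{d}\mathscr{P}^{\otimes S}\Bigr]\quad\text{on } \{|J_e|\le M,\ e\in S\}\text{-truncations},
\]
with an explicit lower density $\Theta_M>0$ depending only on $\beta,M,|S|$, and similarly an upper bound. To avoid continuity issues in $J$, I would first take $\phi$ continuous and bounded with compact support (support of $\mathscr{P}$ is a closed subset of $\mathbb{R}$), pass to the local weak limit, then extend by monotone class. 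The two-sided bounds say the conditional law of $J_S$ given the local exterior data lies between $c_M\,\mathscr{P}^{\otimes S}$ and $C_M\,\mathscr{P}^{\otimes S}$ when restricted to $[-M,M]^S$; letting the local $F$ generate $\mathcal{T}_S'$ (martingale convergence) transfers this to the regular conditional law under $\mathbf{P}_\beta$, giving absolute continuity on each $[-M,M]^S$ and hence globally, with density bounded below by $c_M>0$ on $[-M,M]^S$. Strict positivity a.s. follows since $M$ is arbitrary, and \eqref{eq:bayes-positive} follows by intersecting $A$ with $[-M,M]^S$ for large $M$; by (D2) the sets $\{J_e>0\}$, $\{J_e<0\}$ have positive prior mass.

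The main obstacle is the second step: the conditioning field $\mathcal{T}_S'$ is not local, and the lower bound must be uniform before limits so that it survives weak convergence. One subtlety is that the lower constant must not depend on $J_{S^c}$ or on the non-closed status of neighbouring bonds; the column-sum structure handles $\eta_S$, and the partition-function comparison handles everything else, but the upper bound involves $\int e^{2\beta|J|}\,\mathrm{d}\mathscr{P}$, which may be infinite—this is why I would work with truncations to $[-M,M]^S$ and only claim positivity, not bounded density.
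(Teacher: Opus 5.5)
Your route is the paper's. You compute the exact finite-torus conditional density of $J_S$ (with $\eta_S$ summed out via the column sums), use the single-edge partition-function comparison to get bounds uniform in $L$ and in the conditioning, write a weak formulation against local test functions, pass to the local weak limit, apply martingale convergence, and truncate for unbounded $\mathscr{P}$. There is, however, one step that does not hold as stated: you claim a lower density $\Theta_M$ (your $c_M$) on $[-M,M]^S$, depending only on $\beta,M,|S|$, for the conditional law itself.

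At $j\in[-M,M]^S$ that density is $\bigl[\int g_L(j')/g_L(j)\,\mathscr{P}^{\otimes S}(\mathrm{d}j')\bigr]^{-1}$, and the integral runs over all of $\operatorname{supp}(\mathscr{P})^S$. Your ratio bound controls it only through $\int e^{C\beta|J|}\,\mathrm{d}\mathscr{P}$, which (D1)--(D2) do not make finite. Restricting $\phi$ to $[-M,M]^S$ does not help, because the normalization is untouched. It is this lower bound, not the upper one, that needs exponential moments; the upper density bound on $[-M,M]^S$ is uniform.

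Nor is this only a weakness of the estimate. Take $S=\{e\}$ with $\sigma_x\sigma_y=\tau_x\tau_y=+1$. The torus density is then proportional to $(Z_++Z_-e^{-2\beta j})^{-2}$, where $Z_\pm$ are the partition functions with $e$ removed, restricted to $\sigma_x\sigma_y=\pm1$. Strong frustrating couplings next to $e$ make $Z_-/Z_+$ large, so bounded $j$ is penalized by roughly $(Z_+/Z_-)^2$ relative to large positive $j$. For heavy-tailed $\mathscr{P}$ this drives the conditional mass of $[-M,M]$ to zero. So no conditioning-independent constant exists in general.

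The repair is to keep the comparison in ratio form, as the paper does. For $j,j'\in[-M,M]^S$ you have $g_L(j)\ge e^{-8\beta M|S|}g_L(j')$, uniformly. Its weak form, for $F\ge 0$ local and $\phi,\psi\ge0$ supported in $[-M,M]^S$, is
\[
\mathbf{E}_L\bigl[F\,\phi(J_S)\bigr]\int\psi\,\mathrm{d}\mathscr{P}^{\otimes S}\;\ge\;e^{-8\beta M|S|}\,\mathbf{E}_L\bigl[F\,\psi(J_S)\bigr]\int\phi\,\mathrm{d}\mathscr{P}^{\otimes S}.
\]
This is linear in the law, so it survives the weak limit and martingale convergence. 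With $\psi=\mathbf{1}_{[-M,M]^S}$ it gives, for $A\subset[-M,M]^S$,
\[
\mathbf{P}_\beta(J_S\in A\mid\mathcal{T}'_S)\;\ge\;e^{-8\beta M|S|}\,\frac{\mathscr{P}^{\otimes S}(A)}{\mathscr{P}^{\otimes S}([-M,M]^S)}\,\mathbf{P}_\beta\bigl(J_S\in[-M,M]^S\mid\mathcal{T}'_S\bigr).
\]
The last factor increases to $1$ a.s. as $M\to\infty$, so it is positive for all large $M$ almost surely. This is the ``exhaust the support'' step, and it yields strict positivity of the density and \eqref{eq:bayes-positive}.

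A minor slip: $h_e=e^{\beta J_e(\sigma_x\sigma_y+\tau_x\tau_y)}$ lies in $[e^{-2\beta|J_e|},e^{2\beta|J_e|}]$, not $[1,e^{2\beta|J_e|}]$, because of the ``neither satisfies'' column. This does not affect the argument.
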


\begin{proof}
On a finite torus, change the couplings on $S$ from $j$ to $j'$ and set
$D(j,j'):=\sum_{e\in S}|j_e-j_e'|$.
For one replica, the Boltzmann factor and the ratio of partition functions are each between $e^{-\beta D(j,j')}$ and $e^{\beta D(j,j')}$; hence for two replicas
\begin{equation}\label{eq:torus-rn-comparison}
  e^{-4\beta D(j,j')}
  \leq
  \frac{\mathrm{d}(\mu_{J_{S^c}\cup j',\beta,L}^{\mathrm{per}})^{\otimes 2}}
       {\mathrm{d}(\mu_{J_{S^c}\cup j,\beta,L}^{\mathrm{per}})^{\otimes 2}}
       (\sigma,\tau)
  \leq
  e^{4\beta D(j,j')}.
\end{equation}
After the spins and $\eta_{S^c}$ are fixed, the CMR factors outside $S$ are unchanged and the bond variables on $S$ sum to one.
Thus~\eqref{eq:torus-rn-comparison} is also the likelihood-ratio bound for the conditional density of $J_S$ relative to $\mathscr{P}^{\otimes S}$.
The bound is uniform in $L$ on bounded coupling sets.
Integrating it against cylinder events in $\mathcal{T}_S'$, passing to the defining local weak limit, and then using martingale convergence preserves the comparison under conditioning on all of $\mathcal{T}_S'$; for unbounded $\mathscr{P}$, first restrict $J_S$ to bounded sets and then exhaust its support.
The limiting conditional density is therefore finite and strictly positive $\mathscr{P}^{\otimes S}$-a.e., which gives~\eqref{eq:bayes-positive}.
\end{proof}

\begin{remark}
The CMR verification of label-class coalescence uses three local properties of the torus construction: the finite-coupling equivalence above, finite-box non-nullness of the two-replica spin specification, and the CMR bond kernel of Definition~\ref{def:CMR}.
Consequently that argument also applies to any translation-invariant joint CMR law for which these properties are assumed directly, in particular to coupling-covariant extensions of the construction.
\end{remark}

\section{Grey-cluster analysis}\label{sec:energy}

\subsection{Finite energy of the grey cluster}

One might expect the grey-bond process to have finite energy even in the quenched setting, since the frustration constraint~(C2) applies only to all-blue cycles.
However, a case analysis reveals that local disorder resampling is still needed in one case.

\begin{proposition}[Finite energy of the grey cluster]\label{prop:grey-fe}
Let $\mathcal{T}_b'$ denote the $\sigma$-algebra generated by all variables except $(\eta_b, J_b)$.
The joint measure $\mathbf{P}_\beta$ satisfies: for every bond $b \in E$,
\[
  0 < \mathbf{P}_\beta(\eta_b \in \{\textup{blue}, \textup{red}\} \mid \mathcal{T}_b') < 1 \quad \mathbf{P}_\beta\text{-a.s.}
\]
When the endpoints have opposite overlap this holds even conditional on $J_b$ (i.e., with $\mathcal{T}_b' $ replaced by $\mathcal{T}_b := \sigma(\mathcal{T}_b' , J_b)$).
When the endpoints have the same overlap, one sign of $J_b$ can force the bond closed, which is why the quenched measure at fixed $J$ fails to have finite energy.
\end{proposition}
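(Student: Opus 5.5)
The plan is to condition first on the larger $\sigma$-algebra $\mathcal{T}_b=\sigma(\mathcal{T}_b',J_b)$, so that everything except $\eta_b$ is frozen. Then I would read off the conditional law of $\eta_b$ from the local CMR bond kernel of Definition~\ref{def:CMR}, splitting into cases according to the satisfaction pattern of $b=\{x,y\}$. Given $(J,\sigma,\tau)$ and $\eta_{E\setminus\{b\}}$, the admissible-limit construction and the ergodic-reduction discussion say that the conditional law of $\eta_b$ is the normalized column of the table for the pattern of $b$. On the torus the bond kernel is a product over edges given $(J,\sigma,\tau)$; this passes to the local weak limit because the single-edge kernel is a continuous function of $(J_b,\sigma_x,\sigma_y,\tau_x,\tau_y)$, so conditioning on the other bonds does not change it. With $r_b=e^{-2\beta|J_b|}\in(0,1)$ (using $\mathscr{P}(\{0\})=0$ from (D2)), I would take the cases in order.

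\emph{Opposite overlap} ($q_x\neq q_y$). Exactly one replica satisfies $b$, so the column is $(0,\,r_b(1-r_b),\,r_b^2)$ after the common factor, and the normalized grey probability is $1-r_b\in(0,1)$. Since this holds a.s.\ given $\mathcal{T}_b$, it also holds for any $J_b$ value. That gives the claim conditional on $J_b$.

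\emph{Same overlap, both satisfy} ($J_b\sigma_x\sigma_y>0$). Here the column is $(1-r_b^2,0,r_b^2)$, so the grey probability is $1-r_b^2\in(0,1)$.

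\emph{Same overlap, neither satisfies}. Here the only allowed value is closed, so $\mathbf{P}_\beta(\eta_b\text{ grey}\mid\mathcal{T}_b)=0$. This is the failure at fixed $J$ noted in the statement. Which of the two same-overlap subcases occurs is determined by $\operatorname{sign}(J_b)$ given $\sigma_x\sigma_y$.

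Next I would pass from $\mathcal{T}_b$ to $\mathcal{T}_b'$ by the tower property:
\[
\mathbf{P}_\beta(\eta_b\text{ grey}\mid\mathcal{T}_b')=\mathbf{E}\bigl[g(J_b,\cdot)\mid\mathcal{T}_b'\bigr],
\]
where $g$ is the grey probability from the case analysis. The upper bound $<1$ is immediate: in every case $g\le \max(1-r_b,1-r_b^2)<1$ pointwise.

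The lower bound in the same-overlap case is the one real step. I need the conditional law of $J_b$ given $\mathcal{T}_b'$ to charge the sign set $\{j:\ j\sigma_x\sigma_y>0\}$; on that set $g=1-r_b^2>0$. The subtlety is that $\mathcal{T}_b'$ excludes $\eta_b$ as well as $J_b$, which is exactly the $\sigma$-algebra $\mathcal{T}_S'$ of Proposition~\ref{prop:bayes} with $S=\{b\}$. Applying that proposition with $A=\{j:\operatorname{sign}(j)=\sigma_x\sigma_y\}$ gives a positive conditional probability, since $\mathscr{P}(A)>0$ by (D2). The set $A$ depends on $(\sigma_x,\sigma_y)$, which are $\mathcal{T}_b'$-measurable, so I would apply the proposition separately to the two fixed sets $\{j>0\}$ and $\{j<0\}$ and restrict to the corresponding events.

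The integral of a strictly positive $g$ over a set of positive conditional mass is then positive a.s. In the opposite-overlap case, $g>0$ everywhere, so no resampling is needed.

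The main obstacle I expect is justifying that the conditional law of $\eta_b$ given $\mathcal{T}_b$ in the infinite-volume limit really is the single-edge normalized column. That is, the other bond variables carry no extra information once $(J,\sigma,\tau)$ are fixed. I would settle this on the torus, where conditional independence of the bonds given spins and disorder is exact. I would then pass to the limit by testing against cylinder functions, using the continuity of $w_b$ in $J_b$, and finish with martingale convergence, as in the proof of Proposition~\ref{prop:bayes}.
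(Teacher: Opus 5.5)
Your proposal is correct and follows essentially the same route as the paper: with the spins and $J_b$ frozen, you read off the grey-open probability from the normalized column of the CMR table ($1-r_b$ for opposite overlap, $1-r_b^2$ or $0$ for same overlap). You then invoke Proposition~\ref{prop:bayes} with $S=\{b\}$ to give the favorable sign of $J_b$ positive conditional mass. There are two minor differences: you obtain the upper bound $<1$ pointwise from $g\le\max(1-r_b,1-r_b^2)$ rather than from the unfavorable sign forcing closure, and you justify why the single-edge kernel survives the local weak limit, which the paper takes as part of the construction.
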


\begin{proof}
Let $b = \{x, y\}$.
The conditioning $\mathcal{T}_b'$ freezes the spins, so the conditional law of $\eta_b$ is read directly from the local Edwards--Sokal table in Definition~\ref{def:CMR} after resampling $J_b$.
% Here \mathcal{T}'_b freezes the spin lift. Therefore the single-edge
% finite-energy proof should use the fixed-spin CMR allowance table, not
% the spin-integrated cluster-count formula. The latter belongs to the
% bond marginal and is recorded separately in Remark~\ref{rem:merge-weights}.

\medskip\noindent
\textbf{Case 1: $q_x=-q_y$.}
Then $\sigma_x\sigma_y=-\tau_x\tau_y$.
For either sign of $J_b$, exactly one replica satisfies $b$.
Conditional on $J_b$ and the frozen spins, the only admissible non-closed state is red, and
\[
  \mathbf{P}_\beta(\eta_b = \textup{red} \mid \mathcal{T}_b)
  = 1-r_b \in (0,1),
  \qquad r_b := e^{-2\beta|J_b|}.
\]
Thus the grey-open probability lies strictly between $0$ and $1$ already at fixed $J_b$.

\medskip\noindent
\textbf{Case 2: $q_x=q_y$.}
Then $\sigma_x\sigma_y=\tau_x\tau_y$.
There is a favorable sign of $J_b$ for which both replicas satisfy $b$, and on that sign $\eta_b$ is blue with conditional probability $1-r_b^2>0$ and closed with probability $r_b^2>0$.
For the opposite sign, neither replica satisfies $b$, so $\eta_b$ is forced closed.
By Proposition~\ref{prop:bayes} with $S=\{b\}$, both signs of $J_b$ have positive conditional probability given $\mathcal{T}_b'$.
The favorable sign gives
\[
  \mathbf{P}_\beta(\eta_b = \textup{blue} \mid \mathcal{T}_b') > 0,
\]
while the unfavorable sign gives
\[
  \mathbf{P}_\beta(\eta_b = \textup{closed} \mid \mathcal{T}_b') > 0.
\]
Therefore the grey-open probability again lies strictly between $0$ and $1$.

This qualitative finite-energy statement gives no uniform lower bound on the grey-open conditional probability, so it does not directly yield stochastic domination by a supercritical Bernoulli process; the Peierls argument of Proposition~\ref{prop:grey-perc} is used instead.

\medskip\noindent
In both cases $0 < \mathbf{P}_\beta(\eta_b \in \{\text{blue}, \text{red}\} \mid \mathcal{T}_b') < 1$, establishing finite energy.
\end{proof}

\begin{remark}[Marginal merge weights]\label{rem:merge-weights}
The finite-energy proof above conditions on the spin lift, so no cluster-counting factors are needed.
After integrating out the spins in finite volume, the CMR bond marginal has fugacities $B_{\mathrm{fug}} = (1-r^2)/r^2$ for blue and $R_{\mathrm{fug}} = (1-r)/r$ for red, per channel, together with the geometric factors in Remark~\ref{rem:weights}.
In the merge case ($\Delta K=-1$), the unnormalized weights relative to the closed state are
\begin{equation}\label{eq:merge-weights}
  \underbrace{1-r^2}_{\text{blue}\;(2^{-2})} \;:\; \underbrace{2r(1-r)}_{\text{red}\;(2^{-1} \times 2\text{ channels})} \;:\; \underbrace{4r^2}_{\text{closed}\;(1)},
\end{equation}
where blue incurs $2^{-2}$ ($\Delta K=\Delta C_b=-1$), red incurs $2^{-1}$ ($\Delta K=-1$, $\Delta C_b=0$), and closed is the reference.
In the cycle case ($\Delta K=\Delta C_b=0$), no geometric factors arise.
\end{remark}

\begin{remark}[Comparison with single-edge disorder averaging]\label{rem:single-bond}
The role of Proposition~\ref{prop:bayes} is analogous to the single-edge disorder average in De~Santis and Gandolfi~\cite[Lemma~3.1]{DSG1999}: in both arguments, summing over the coupling on one edge introduces a partition-function ratio and restores a positive conditional probability. Their lemma concerns a frustrated FK model rather than the frozen-spin CMR conditional used here, so we use it only as a comparison.
\end{remark}

\begin{corollary}[Grey-cluster uniqueness]\label{cor:grey-unique}
The grey subgraph has at most one infinite connected component under $\mathbf{P}_\beta$.
\end{corollary}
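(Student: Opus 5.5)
The plan is to run the standard Burton--Keane argument \cite{BK1989} for the grey subgraph under the joint measure $\mathbf{P}_\beta$. Finite energy is supplied by Proposition~\ref{prop:grey-fe} in place of the usual product-measure insertion tolerance. First I check the three inputs. For translation invariance, $\mathbf{P}_\beta$ is a local weak limit of periodic-torus laws, each invariant under torus translations, so the limit is invariant under $\mathbb{Z}^d$-translations. For ergodicity, I pass to an ergodic component exactly as in the ergodic reduction paragraph. The event ``exactly $k$ infinite grey clusters'' is translation invariant, so $N$, the number of infinite grey clusters, is a.s.\ constant in each component. Moreover Proposition~\ref{prop:bayes} and the local spin/bond kernels survive the disintegration, so the finite-energy statement of Proposition~\ref{prop:grey-fe} holds in a.e.\ component.

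Next I upgrade single-bond finite energy to finite-box finite energy. For a finite box $B$ with edge set $E_B$, I want to show that, conditional on everything outside $(J_{E_B},\eta_{E_B})$, the events ``all edges of $E_B$ grey-open'' and ``all edges of $E_B$ closed'' both have positive probability on the support. The deletion part is immediate from the CMR table, since the closed weight $r_e^2>0$ is always allowed, whatever the spins and couplings.

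For insertion I cannot simply iterate Proposition~\ref{prop:grey-fe}, because the spins are frozen. Instead I use Proposition~\ref{prop:bayes} with $S=E_B$. For each edge $e$ of $B$, choose the sign of $J_e$ so that, given the frozen $(\sigma,\tau)$, the edge is satisfied by at least one replica: the favorable sign when $q_x=q_y$, and either sign when $q_x=-q_y$. This prescribes a set $A$ of couplings of positive $\mathscr{P}^{\otimes E_B}$-measure by (D2), so it has positive conditional probability. On $A$, the CMR kernel makes every edge blue or red with conditional probability at least $\prod_{e}\min(1-r_e^2,1-r_e)>0$, independently across edges given $(J,\sigma,\tau)$. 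So the box can be filled with grey bonds with positive conditional probability, and likewise emptied.

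With these inputs, the argument is the classical one. First rule out $2\le N<\infty$: with positive probability all $N$ clusters meet a box $B$, and filling $B$ changes only the variables inside $B$. The box-filling step shows the modified configuration, which has exactly one infinite cluster, has positive probability, contradicting a.s.\ constancy of $N$. Then rule out $N=\infty$: with positive probability a box meets three infinite clusters, and filling creates a trifurcation, i.e.\ a vertex whose removal splits its cluster into at least three infinite pieces. This gives a positive density of encounter points, which contradicts the combinatorial boundary count in large boxes.

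The main obstacle is the positivity transfer in both steps. The event ``$B$ meets $N$ distinct infinite clusters'' is measurable with respect to the variables outside $B$ together with the exterior components, so conditioning is legitimate. I must still argue that changing $(J_{E_B},\eta_{E_B})$ from its original value to the filled configuration preserves positive probability. I would write the probability of the target event as the expectation of the conditional probability of the filling event given $\mathcal{T}_{E_B}'$, which is a.s.\ positive on a positive-probability exterior event. The remaining subtlety is that spins inside $B$ are \emph{not} resampled. This is harmless here, because grey openness only needs each edge satisfied by some replica, and a suitable coupling sign always achieves that regardless of the spins.
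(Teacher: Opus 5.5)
Your argument is sound in outline, but it takes a different route from the paper.

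\textbf{The paper's route.} The paper stays at the single-edge level. It starts from Proposition~\ref{prop:grey-fe} and uses the tower property to pass from $\mathcal{T}_b'$ to the coarser $\sigma$-field of the remaining bond variables. This gives insertion and deletion tolerance for the translation-invariant grey-bond marginal, and the paper then cites the standard Burton--Keane theorem as a black box.

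\textbf{Your route.} You stay in the joint measure and prove finite-box finite energy directly from Proposition~\ref{prop:bayes} with $S=E_B$. You choose coupling signs so that every edge is satisfied by some replica, which is the same mechanism as Lemma~\ref{lem:resample}. You then re-run Burton--Keane by hand inside ergodic components.

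\textbf{Comparison.} Your approach yields a stronger and explicit local statement: every grey/closed pattern on $E_B$ has positive conditional probability given $\mathcal{T}'_{E_B}$. The cost is redoing the classical argument and relying on the ergodic-reduction claim that Proposition~\ref{prop:bayes} survives disintegration. The paper's reduction to the bond marginal is shorter and needs only single-edge input.

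\textbf{Two small corrections.} First, your claim that Proposition~\ref{prop:grey-fe} cannot be iterated is mistaken. Each step changes only $(J_b,\eta_b)$, and $\mathcal{T}_b'$ already contains the frozen spins and the previously modified edges, so the chain rule gives box-level insertion directly. Second, filling all of $E_B$ does not in general produce a single vertex whose removal leaves three infinite pieces. For the trifurcation step, open a tree joining three exterior arms to one central vertex and close the remaining edges of $E_B$. Your construction allows this, since each edge can independently be made grey or closed. Alternatively, run the counting argument with disjoint trifurcation boxes.
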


\begin{proof}
Proposition~\ref{prop:grey-fe} establishes $\mathbf{P}_\beta(\eta_b \in \{\text{blue, red}\} \mid \mathcal{T}_b') \in (0,1)$ a.s.
By the tower property, the same bounds hold conditional on $\eta_{-b}$ alone (which is coarser than $\mathcal{T}_b'$), giving insertion and deletion tolerance for the grey-bond marginal.
The standard Burton--Keane argument~\cite{BK1989,GKN1992} then applies.
\end{proof}

\begin{corollary}[Subcritical phase]\label{cor:grey-sub}
For $\beta$ sufficiently small, $G_{\mathrm{grey}}$ contains no infinite connected component $\mathbf{P}_\beta$-a.s.
\end{corollary}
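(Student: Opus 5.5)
We prove Corollary~\ref{cor:grey-sub} by stochastic domination under the joint measure, along the truncation route announced in the proof strategy. The aim is to show that, under $\mathbf{P}_\beta$, the grey-open indicators $(\mathbf 1\{\eta_e\in\{\text{blue},\text{red}\}\})_{e\in E}$ are dominated by an independent Bernoulli field of density $p(\beta)$ with $p(\beta)\to0$ as $\beta\to0$. Once $p(\beta)<p_c(\mathbb{Z}^d)$, or simply $p(\beta)<1/(2d-1)$ via the crude path-counting bound, the result follows.

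The key input is the fixed-spin table of Definition~\ref{def:CMR}. Conditional on $(J,\sigma,\tau)$, the bond variables are independent across edges; this holds in every torus and passes to the local weak limit, since the infinite-volume bond kernel is the normalized column of the same table. The grey-open probability of edge $e$ is either $1-r_e^2$, when both replicas satisfy $e$, or $1-r_e$, when exactly one does, or $0$. In all cases it is at most
\[
1-r_e^2 = 1-e^{-4\beta|J_e|}\le 4\beta|J_e|.
\]
This bound depends only on $J_e$ and not on the spins. Hence, conditionally on $J$, the grey field is dominated by independent Bernoulli$(1-e^{-4\beta|J_e|})$ variables.

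Next we average over $J$. The disorder marginal of $\mathbf{P}_\beta$ is $\mathscr{P}^{\otimes E}$, i.e.\ i.i.d.\ by~(D1); this was noted in the ergodic reduction and holds because $\nu_L$ is a product measure at every $L$. Composing the i.i.d.\ $J$ with the conditionally independent Bernoulli variables therefore gives an i.i.d.\ Bernoulli field of density
\[
p(\beta) := \int \bigl(1-e^{-4\beta|j|}\bigr)\,\mathscr{P}(\mathrm{d}j).
\]
Since the integrand is bounded by $1$ and tends to $0$ pointwise, dominated convergence gives $p(\beta)\to0$. No moment assumption is needed.

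For the truncation version stated in the strategy, split each edge into strong edges ($|J_e|>M$) and weak edges ($|J_e|\le M$). An edge is then dominated-open with probability at most $\mathscr{P}(|J|>M)+4\beta M$. Choose $M$ large first and then $\beta$ small. Set $\beta_{\mathrm{sub}}$ to be any $\beta$ with $p(\beta)(2d-1)<1$. A Peierls path count then shows that the probability the origin is joined to distance $n$ is at most $2d\,(2d-1)^{n-1}p^n\to0$. By countable additivity over starting vertices, there is no infinite grey cluster almost surely.

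The main obstacle is justifying that conditional edge-independence given $(J,\sigma,\tau)$, and hence the domination, survives the local weak limit. We handle it by proving the domination on each torus through the inequality $\mathbf{P}^{\mathrm{per}}_{\beta,L}(\text{all edges in }F\text{ grey-open})\le p(\beta)^{|F|}$ for every finite $F$. These are cylinder events, so the inequality passes to any subsequential limit. That is all the path-counting argument uses.
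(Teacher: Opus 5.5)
Your proposal is correct and follows essentially the paper's argument: conditional on $(J,\sigma,\tau)$ each edge is grey-open with probability at most $1-e^{-4\beta|J_e|}$, and the i.i.d.\ disorder marginal then gives domination by a subcritical independent Bernoulli field. Your variations are minor and harmless: you average over $J$ directly via dominated convergence instead of truncating at level $M$, and you use explicit path counting instead of $p_c(\mathbb{Z}^d)$. You also prove the all-open cylinder bound on each torus before taking the limit, which justifies the infinite-volume conditional independence that the paper's proof uses without comment.
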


\begin{proof}
Fix $M > 0$ and write
\[
  q(M) := \mathscr{P}(|J_e| > M).
\]
Conditional on $(J, \sigma, \tau)$, the bond variables are independent across edges under the Edwards--Sokal coupling, and an edge is grey-open with probability at most
\[
  p_e = 1 - e^{-4\beta |J_e|}.
\]
Let $(U_e)_{e \in E}$ be i.i.d.\ uniform random variables, independent of $(J, \sigma, \tau)$, and realize the grey-open indicator $G_e := \mathbf{1}[\eta_e \in \{\text{blue}, \text{red}\}]$ by the standard monotone coupling.
Then
\[
  G_e \leq \Xi_e := \mathbf{1}\bigl[|J_e| > M \text{ or } U_e \leq 1 - e^{-4\beta M}\bigr]
\]
for every edge $e$: if $|J_e| \leq M$, then $p_e \leq 1 - e^{-4\beta M}$, so grey-open can occur only when $U_e \leq 1 - e^{-4\beta M}$.
Since the couplings are i.i.d.\ by~\textup{(D1)}, the variables $(\Xi_e)_{e \in E}$ are i.i.d.\ Bernoulli with parameter
\[
  p_{\mathrm{dom}}(M,\beta)
  = q(M) + (1-q(M))(1-e^{-4\beta M}).
\]
Choose $M$ large enough that $q(M) < p_c(\mathbb{Z}^d)/2$, then choose $\beta$ small enough that $(1-q(M))(1-e^{-4\beta M}) < p_c(\mathbb{Z}^d)/2$.
For this choice, $p_{\mathrm{dom}}(M,\beta) < p_c(\mathbb{Z}^d)$, so the dominating Bernoulli bond percolation is subcritical.
Therefore $G_{\mathrm{grey}}$ contains no infinite connected component $\mathbf{P}_\beta$-a.s.
\end{proof}

\subsection{Grey percolation transition}

Corollaries~\ref{cor:grey-unique} and~\ref{cor:grey-sub} establish uniqueness and the subcritical phase directly from finite energy.
For the supercritical phase, we adapt the Peierls contour argument of De~Santis and Gandolfi~\cite[Theorem~2.2]{DSG1999}.

\begin{lemma}[Contour partition lemma]\label{lem:contour-partition}
Fix a box $\Lambda$, a Peierls contour $\gamma \subset \Lambda$, and an exterior bond configuration $\eta_{\Lambda \setminus \gamma}$ satisfying~\textup{(C1)}.
Then the contour bonds admit a partition $\gamma = A \sqcup B$ such that opening exactly the bonds of $A$ as red (with $B$ closed) preserves~\textup{(C1)}, and likewise opening exactly the bonds of $B$ as red (with $A$ closed) preserves~\textup{(C1)}.
Consequently every subset of $A$ and every subset of $B$ is red-admissible.
\end{lemma}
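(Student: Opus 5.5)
The plan is to encode (C1) as a $\mathbb{Z}_2$-labelling and read the partition off the labels. By the balance theorem for signed graphs, a bond configuration satisfies (C1) exactly when there is a labelling $\lambda:\Lambda\to\{0,1\}$ such that an open bond $\{x,y\}$ is red iff $\lambda_x\neq\lambda_y$ and blue iff $\lambda_x=\lambda_y$. Here red bonds carry sign $-1$ and blue bonds sign $+1$, and (C1) says every cycle has an even number of red bonds. This is just the path-parity labelling of Remark~\ref{rem:parity-marginal}, extended arbitrarily (one choice per grey cluster) to the whole box.

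Apply this to the exterior configuration $\eta_{\Lambda\setminus\gamma}$, which satisfies (C1) by hypothesis, and fix one such labelling $\lambda$. Define
\[
A := \{\{x,y\}\in\gamma : \lambda_x\neq\lambda_y\},\qquad B := \gamma\setminus A,
\]
where $A$ collects the contour bonds whose endpoints get different labels and $B$ those whose endpoints get equal labels.

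\emph{Opening $A$ as red.} The labelling $\lambda$ still certifies (C1): each newly red bond joins opposite labels, just as the existing red bonds do, and blue bonds still join equal labels. So every cycle crosses an even number of label changes, hence an even number of red bonds. Opening any subset of $A$ as red also keeps $\lambda$ valid, so every subset of $A$ is red-admissible.

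\emph{Opening $B$ as red.} Here I would use a relabelled $\lambda'$ obtained by flipping $\lambda$ on one side of $\gamma$. Since $\gamma$ is a Peierls contour, it is a closed cutset separating an interior vertex set $I$ from the exterior, so every contour bond has exactly one endpoint in $I$. Set $\lambda'=\lambda$ off $I$ and $\lambda'=1-\lambda$ on $I$.
\begin{itemize}
\item Exterior bonds have both endpoints on the same side of $\gamma$, since the contour bonds are the only bonds crossing between $I$ and its complement. Their label differences are therefore unchanged, and $\lambda'$ remains consistent with $\eta_{\Lambda\setminus\gamma}$.
\item For contour bonds the flip reverses equality and difference. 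So $B$ is exactly the set of contour bonds with $\lambda'_x\neq\lambda'_y$.
\end{itemize}
By the same argument as for $A$, opening any subset of $B$ as red preserves (C1).

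The main obstacle is making precise that $\gamma$ is an edge cutset with each contour bond having exactly one endpoint in the enclosed region $I$, and that no other bonds of $\Lambda$ cross between $I$ and its complement. For a standard Peierls contour (the bonds dual to a closed $(d-1)$-dimensional surface, i.e.\ the edge boundary of a finite set $I$) this holds by construction, and I would take that as the definition of contour.

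One subtlety is whether the closed bonds of $\gamma$ need labels at all. They do not: labels are vertex-based, so closing bonds imposes no constraint. The lemma never requires opening bonds from both $A$ and $B$ simultaneously, which is exactly why a single partition suffices.
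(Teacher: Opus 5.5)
Your proof is correct, but it takes a more self-contained route than the paper's. Both arguments start from the same reduction: (C1) is read as balance of a signed graph in which blue bonds carry $+1$ and red bonds, including the contour bonds proposed for opening, carry $-1$.

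The paper then does not construct the partition. It cites De~Santis--Gandolfi's Lemma~2.1 for $d=2$ and Lemma~2.4 for $d\ge 3$, and adds only your closing remark that deleting opened contour bonds cannot create a cycle.

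You prove the partition step directly by switching. A balancing labelling $\lambda$ splits $\gamma$ into discordant bonds ($A$) and concordant bonds ($B$). Flipping $\lambda$ on the enclosed set $I$ exchanges these two roles and leaves every exterior bond's status unchanged.

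Your route gives an explicit partition and works the same way in every dimension. It also makes visible the only geometric input needed: $\gamma$ must be exactly the edge boundary $\partial_E I$ of a finite vertex set. This hypothesis matters. Suppose $\gamma$ contained an extra bond with both endpoints on one side, whose endpoints are already joined in the exterior by a path with an even number of red bonds. That bond could never be opened as red, so no valid partition would exist.

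The paper's citation silently relies on its contours being cuts in this sense. That is consistent with how they arise there, as closed edge boundaries of finite grey clusters. In exchange, the citation avoids re-deriving the topology of Peierls surfaces for $d\ge 3$. Its cost is a black-box appeal, plus the unstated check that the auxiliary signed graph fits the cited lemmas' hypotheses. In that graph the signs come from the exterior colours and from the choice to open contour bonds as red, not from fixed couplings.
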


\begin{proof}
Regard the occupied exterior graph as a signed graph by assigning sign $+1$ to every blue bond and sign $-1$ to every red bond, and assign sign $-1$ to every contour bond that is proposed to be opened as red.
Condition~\textup{(C1)} says exactly that the occupied exterior configuration is unfrustrated in this auxiliary signed graph: the product of the edge signs around every occupied cycle is~$+1$.
The contour partition is therefore precisely the partition supplied by De~Santis and Gandolfi~\cite[Lemma~2.1]{DSG1999} in two dimensions and its Peierls-surface extension~\cite[Lemma~2.4]{DSG1999} in higher dimensions.
Both $A$ and $B$ can be occupied without producing a negative-sign cycle, which is equivalent here to preserving~\textup{(C1)}.
Deleting occupied contour bonds cannot create a cycle, so every subset of either set is also red-admissible.
\end{proof}

\begin{proposition}[Supercritical grey percolation]\label{prop:grey-perc}
Under~\textup{(D1)--(D2)}, for every $d \geq 2$ and $\beta$ sufficiently large (depending on $d$ and $\mathscr{P}$), $G_{\mathrm{grey}}$ contains exactly one infinite connected component $\mathbf{P}_\beta$-a.s.

For coupling distributions with $|J_e| \geq J_{\min} > 0$ a.s.\ (e.g., the $\pm J$ model), the \emph{existence estimate} is \textbf{quenched}: for every such coupling realization, every infinite-volume CMR bond marginal consistent with the local specification has an infinite grey cluster when
\[
  \beta > \frac{1}{2J_{\min}} \log\!\bigl(2c(d)^2 - 1\bigr),
\]
where $c(2) = 3$ and $c(d) = \exp(64\log d / d)$ for $d \geq 3$.
Uniqueness holds under $\mathbf{P}_\beta$ and hence, by disintegration, under $\zeta_{\beta,J}$ for $\nu$-almost every $J$; no fixed-$J$ uniqueness claim is made for exceptional realizations.
For general coupling laws satisfying~\textup{(D1)--(D2)}, including Gaussian disorder and discrete laws accumulating at zero, the result follows by averaging the quenched Peierls estimate over $J$.
\end{proposition}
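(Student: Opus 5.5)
Uniqueness is already supplied by Corollary~\ref{cor:grey-unique}, so the task reduces to \emph{existence} of an infinite grey cluster for large $\beta$. I will prove existence by a Peierls contour bound at fixed spins and disorder.

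\textbf{Setup.} Fix a finite box $\Lambda$ containing the origin. On the event that the grey cluster of the origin is finite, it is enclosed by a closed Peierls contour $\gamma$: a dual circuit when $d=2$, a plaquette surface when $d\ge3$. Every primal bond crossing $\gamma$ is closed. The number of contours of size $n$ surrounding the origin is at most $c(d)^{n}$, with $c(d)$ the De~Santis--Gandolfi constants. It therefore suffices to bound, uniformly in the exterior, the conditional probability that all bonds of $\gamma$ are closed by something like $\bigl(2/(1+e^{2\beta J_{\min}})\bigr)^{|\gamma|/2}$. Summing over $n$ then gives a convergent series that tends to $0$ as $\beta\to\infty$, at the stated threshold $\beta>\frac{1}{2J_{\min}}\log(2c(d)^2-1)$.

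\textbf{Contour estimate.} I work with the finite-volume bond marginal, where (C1) is the only cycle constraint relevant to red bonds. The steps are as follows.
\begin{itemize}
\item Condition on the bond configuration off $\gamma$, which satisfies (C1).
\item Apply Lemma~\ref{lem:contour-partition} to obtain $\gamma=A\sqcup B$, where every subset of $A$, and every subset of $B$, is red-admissible. Say $|A|\ge|\gamma|/2$.
\item Compare the all-closed configuration on $\gamma$ with the configurations opening an arbitrary subset of $A$ as red. Since red bonds never close blue-frustrated cycles, (C2) is irrelevant here.
\item Account for weights. Using the marginal weights of Remark~\ref{rem:merge-weights}, each opened red bond gains at least a factor $(1-r)/r$ per channel relative to closed, up to the geometric factor $2^{-1}$ per merge. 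Opening a subset of $A$ merges at most that many clusters, so the $2^{\Delta K}$ losses are controlled edge by edge.
\item Bound the ratio of the all-closed weight to the total weight on $\gamma$ by
\[
\prod_{e\in A}\Bigl(1+\tfrac{(1-r_e)}{r_e}\Bigr)^{-1}\le \bigl(e^{-2\beta J_{\min}}\bigr)^{|\gamma|/2}
\]
up to constants.
\end{itemize}
The main obstacle is this weight bookkeeping. Merges cost factors of $2$, and cycle cases interact with the red-parity choice. I would first try to verify that, for fixed exterior, opening each $A$-edge independently loses at most a factor $2$ per edge, and that this factor is absorbed into the $2c(d)^2-1$ term.

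\textbf{Passage to infinite volume and general laws.} The contour bound depends only on $J_e$ for $e\in\gamma$ and is uniform in the exterior and in $L$. It therefore passes to the torus limit and holds for every infinite-volume CMR marginal consistent with the specification. This gives the quenched statement when $|J_e|\ge J_{\min}$.

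For general laws satisfying (D1)--(D2), I will average over $J$ instead of using a deterministic lower bound. By (D1), the expectation over $J$ of the per-edge factor $\phi_\beta(J_e)$ is a single i.i.d.\ average $\mathbb{E}\,\phi_\beta(J)$, and it tends to $0$ by dominated convergence since $\mathscr{P}(\{0\})=0$. The annealed expected number of surrounding all-closed contours is then at most
\[
\sum_n c(d)^n\bigl(\mathbb{E}\,\phi_\beta(J)\bigr)^{n/2}<1
\]
for large $\beta$. Hence $\mathbf{P}_\beta(\text{origin in an infinite grey cluster})>0$. Ergodicity, or translation invariance together with a $0$--$1$ argument, then yields an infinite grey cluster almost surely.

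Combining this existence result with Corollary~\ref{cor:grey-unique} gives exactly one infinite grey cluster. Disintegrating over $J$ gives uniqueness under $\zeta_{\beta,J}$ for $\nu$-a.e.\ $J$.
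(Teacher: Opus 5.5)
Your route is essentially the paper's. You restrict the denominator to red-only openings of one side of the partition from Lemma~\ref{lem:contour-partition}, note that (C2) is vacuous for red insertions, bound the gain per opened red edge below by $R_{\mathrm{merge},e}=(1-r_e)/(2r_e)$, and sum $(1+R_{\mathrm{merge}})^{-|\gamma|/2}=(2r/(1+r))^{|\gamma|/2}$ against the contour count. The ``factor 2 per edge'' you want to verify is the one-line observation $K(\eta^S)-K(\eta^\varnothing)\ge -|S|$: red bonds leave $C_b$ unchanged, and each inserted edge lowers $K$ by at most one. This loss is exponential in $|A|$, not ``up to constants''. Your displayed product should therefore read $\prod_{e\in A}(1+R_{\mathrm{merge},e})^{-1}$, and it is exactly this loss that gives the threshold $r<1/(2c(d)^2-1)$ rather than $r<1/c(d)^2$. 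Also, the estimate must be made in the spin-integrated bond marginal at fixed $J$, as you in fact do, not ``at fixed spins''. With spins frozen, a contour edge unsatisfied by both replicas is closed with probability one, and no Peierls bound holds.

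Two steps need repair.

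First, in the annealed case the set $A$ is determined by the exterior bond configuration, which depends on $J$. So at fixed $J$ your bound is only $\max_{|A|\ge|\gamma|/2}\prod_{e\in A}\phi_\beta(J_e)$ (with $\phi_\beta=2r/(1+r)$). This is not a product of i.i.d.\ factors, and its $\nu$-average does not factor edge by edge into $(\mathbb{E}\,\phi_\beta)^{|\gamma|/2}$. The paper removes the partition dependence using $\max\{X_A,X_B\}\ge\sqrt{X_AX_B}$. This gives the exterior-independent bound $\prod_{e\in\gamma}(2r_e/(1+r_e))^{1/2}$, whose average is $\alpha(\beta)^{|\gamma|}$. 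By Jensen this is at most $(\mathbb{E}\,\phi_\beta)^{|\gamma|/2}$, so your final display holds once this step is inserted.

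Second, your passage from positive probability to almost-sure existence via ergodicity fails for the quenched claim, because a fixed coupling realization is not translation invariant. It also needs extra care in the annealed claim, since $\mathbf{P}_\beta$ is not assumed ergodic. Since your contour sum converges, use Borel--Cantelli as the paper does: almost surely only finitely many all-closed contours surround the origin. The separation argument then puts a vertex adjacent to the outermost one in an infinite grey cluster, at fixed $J$ as well as under $\mathbf{P}_\beta$.
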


\begin{proof}
We adapt the Peierls argument of~\cite[Theorem~2.2]{DSG1999} with one modification: we replace their frustration-based $\mathbb{Z}_2$ partition with a red-bond parity partition.
For $\nu$-almost every $J$, the DLR property of $\rho_{\beta,J}$ together with the local CMR bond kernel makes $\zeta_{\beta,J}$ consistent with the corresponding spin-integrated CMR bond specification.
Work first in a finite-volume CMR bond marginal $\zeta_{\beta,J,\Lambda}$ with the boundary condition induced by the fixed exterior configuration.
Let $\eta^\varnothing$ be the configuration with every contour bond closed, and let $V_\Lambda$ denote the full CMR weight, including its geometric factors.
Then
\[
  \zeta_{\beta,J,\Lambda}(\gamma \equiv 0 \mid \eta_{\Lambda \setminus \gamma})
  = \frac{V_\Lambda(\eta^\varnothing)}
         {\sum_{\eta_\gamma} V_\Lambda(\eta_\gamma)},
\]
where the sum runs over all contour configurations compatible with $\eta_{\Lambda \setminus \gamma}$ and the constraints~(C1)--(C2).
Since restricting the denominator to \emph{any} subset of valid configurations gives a valid upper bound, we restrict to configurations in which every open bond on $\gamma$ is \textbf{red}.
The exterior configuration already satisfies~(C2), and adding red contour bonds cannot create a new all-blue cycle, so~(C2) imposes no further restriction on this family.

\smallskip\noindent
\textbf{Step 1: Parity partition.}
Fix a coupling realization $J$ and a Peierls contour $\gamma$ (the image of a self-avoiding dual circuit in $d = 2$, or a connected plaquette surface in $d \geq 3$; see~\cite[p.~1786, 1791]{DSG1999}).
Let $\Lambda$ be a box containing $\gamma$ and condition on $\eta_{\Lambda \setminus \gamma}$ (all bonds outside $\gamma$).
By Lemma~\ref{lem:contour-partition}, the contour bonds partition as $\gamma = A \sqcup B$ so that every subset of $A$ can be opened as red with $B$ closed without violating~(C1), and likewise every subset of $B$ can be opened as red with $A$ closed.
Denote $|A| = m$; by exchanging $A$ and $B$ if necessary, we may assume $m \leq |\gamma|/2$.

\smallskip\noindent
\textbf{Step 2: Weight ratio.}
For a red-admissible subset $S \subseteq \gamma$, let $\eta^S$ be obtained from $\eta^\varnothing$ by opening exactly the bonds of $S$ as red.
The cluster-weight formula in Remark~\ref{rem:weights} gives
\[
  \frac{V_\Lambda(\eta^S)}{V_\Lambda(\eta^\varnothing)}
  = \prod_{e \in S}\frac{1-r_e}{r_e}
    2^{K(\eta^S)-K(\eta^\varnothing)}.
\]
Opening red bonds does not change $C_b$, and each inserted edge decreases the number of grey components by at most one.  Hence
\[
  K(\eta^S)-K(\eta^\varnothing) \geq -|S|,
\]
and therefore
\begin{equation}\label{eq:inhomogeneous-red-ratio}
  \frac{V_\Lambda(\eta^S)}{V_\Lambda(\eta^\varnothing)}
  \;\geq\; \prod_{e \in S} R_{\mathrm{merge},e},
  \qquad R_{\mathrm{merge},e} := \frac{1-r_e}{2r_e}.
\end{equation}
When $r_e \equiv r$ on the contour, as in the $\pm J$ model, this lower bound is $R_{\mathrm{merge}}^{|S|}$ with
\begin{equation}\label{eq:R-def}
  R_{\mathrm{merge}} := \frac{1-r}{2r}.
\end{equation}

\smallskip\noindent
\textbf{Step 3: Binomial sum.}
Write $\gamma \equiv 0$ for the event that all bonds on $\gamma$ are closed.
Following~\cite[eq.~(2.4)]{DSG1999}, and noting that the two parity-restricted families overlap in the all-closed configuration, the conditional probability of $\gamma \equiv 0$ satisfies
\begin{equation}\label{eq:peierls-denom}
  \zeta_{\beta,J,\Lambda}(\gamma \equiv 0 \mid \eta_{\Lambda \setminus \gamma})
  \;\leq\;
  \Biggl[
    \sum_{k=0}^{m} \binom{m}{k} R_{\mathrm{merge}}^k
    \;+\; \sum_{k=0}^{|\gamma|-m} \binom{|\gamma|-m}{k} R_{\mathrm{merge}}^k
    \;-\; 1
  \Biggr]^{-1}.
\end{equation}
where the first sum runs over subsets of $A$ opened as red (with $B$ closed), the second over subsets of $B$ opened as red (with $A$ closed), and the numerator (all-closed weight) is normalized to~$1$.
By the binomial theorem, each sum evaluates to
\[
  \sum_{k=0}^{n} \binom{n}{k} R_{\mathrm{merge}}^k
  = (1+R_{\mathrm{merge}})^n
  = \Bigl(\frac{1+r}{2r}\Bigr)^{\!n},
\]
so the denominator of~\eqref{eq:peierls-denom} is
\[
  (1+R_{\mathrm{merge}})^m
  + (1+R_{\mathrm{merge}})^{|\gamma|-m} - 1.
\]
Since $m \leq |\gamma|/2$ and $1+R_{\mathrm{merge}} > 1$, the larger term satisfies
\[
  (1+R_{\mathrm{merge}})^{|\gamma|-m}
  \geq (1+R_{\mathrm{merge}})^{|\gamma|/2},
\]
hence
\begin{equation}\label{eq:peierls}
  \zeta_{\beta,J,\Lambda}(\gamma \equiv 0 \mid \eta_{\Lambda \setminus \gamma})
  \;\leq\; \Bigl(\frac{2r}{1+r}\Bigr)^{\!|\gamma|/2}.
\end{equation}
The estimate is uniform in the finite volume and the exterior-induced boundary condition, so it passes to every infinite-volume CMR bond marginal obtained from the corresponding local specification.

\smallskip\noindent
\textbf{Step 4: Contour sum.}
The number of Peierls contours of size $k$ surrounding the origin is at most $k \cdot c(d)^k$~\cite[p.~1788, 1791]{DSG1999}.
Summing over contours:
\[
  \sum_{\gamma \ni O} \zeta_{\beta,J}(\gamma \equiv 0)
  \;\leq\; \sum_{k \geq 1} k\,c(d)^k \Bigl(\frac{2r}{1+r}\Bigr)^{\!k/2}
  \;<\; \infty
\]
provided $c(d)\bigl(\frac{2r}{1+r}\bigr)^{1/2} < 1$, equivalently $r < 1/(2c(d)^2 - 1)$.
By Borel--Cantelli, a.s.\ only finitely many contours surrounding the origin are all-closed.
As in the separation argument of~\cite[p.~1788]{DSG1999}, choose a vertex outside all such contours and adjacent to an outermost one (or use the origin if there is no such contour).  If its grey cluster were finite, the closed edge boundary of that cluster would supply another surrounding contour, a contradiction.  Hence grey percolation occurs.
Under the joint law $\mathbf{P}_\beta$, uniqueness follows from Corollary~\ref{cor:grey-unique}; disintegration then gives uniqueness under $\zeta_{\beta,J}$ for $\nu$-almost every $J$.

For the $\pm J$ model ($|J| = 1$) in $d = 2$: the threshold is $r < 1/17$, i.e., $\beta > \tfrac{1}{2}\ln 17 \approx 1.42$.
This coincides with De~Santis and Gandolfi's threshold $\bar{p}_c^{(2)}(2) \leq 16/17$ for the $q = 2$ FK random-cluster model~\cite[Corollary~2.3]{DSG1999}: their occupied-to-closed ratio $p/[2(1-p)]$ equals our $R_{\mathrm{merge}} = (1-r)/(2r)$ after setting $p=1-r$, giving the same exponential contour factor and hence the same threshold condition.
For $d \geq 3$, the contour-counting bound above gives the stated threshold with $c(d)=\exp(64\log d/d)$.

More generally, if $|J_e|\geq J_{\min}>0$, set $r_*:=e^{-2\beta J_{\min}}$.  Since $r_e\leq r_*$ and $r\mapsto 2r/(1+r)$ is increasing, the edge-dependent estimate below is bounded by the homogeneous expression with $r_*$, yielding the threshold in the proposition.

\smallskip\noindent
\textit{Annealed extension.}
For a general coupling law satisfying~\textup{(D1)--(D2)}, the values $r_e = e^{-2\beta|J_e|}$ may vary across edges and approach~$1$.
The quenched bound~\eqref{eq:peierls} generalizes to
\[
  \zeta_{\beta,J}(\gamma \equiv 0)
  \;\leq\;
  \prod_{e \in \gamma} \Bigl(\frac{2r_e}{1+r_e}\Bigr)^{\!1/2},
\]
Indeed,~\eqref{eq:inhomogeneous-red-ratio} makes the two parity-restricted contributions at least
\[
  X_A:=\prod_{e\in A}(1+R_{\mathrm{merge},e}),
  \qquad
  X_B:=\prod_{e\in B}(1+R_{\mathrm{merge},e}).
\]
Their sum minus the duplicated all-closed term is at least $\max\{X_A,X_B\}\geq\sqrt{X_AX_B}$, which gives the displayed product bound.
Taking the expectation over $J$ and using independence~(D1):
\[
  \mathbf{P}_\beta(\gamma \equiv 0)
  \;\leq\; \alpha(\beta)^{|\gamma|},
  \qquad
  \alpha(\beta) := \mathbb{E}\Bigl[\Bigl(\frac{2r_e}{1+r_e}\Bigr)^{\!1/2}\Bigr].
\]
As $\beta \to \infty$, $r_e \to 0$ pointwise for every $J_e \neq 0$, and the integrand is bounded by~$1$.
By dominated convergence, $\alpha(\beta) \to \mathscr{P}(\{0\}) = 0$ (using~(D2)).
Hence $\alpha(\beta) < 1/c(d)$ for $\beta$ large enough, and the contour sum converges.
\end{proof}

\begin{remark}[Possible optimization]
The Peierls bound above is deliberately coarse and appears to lose in two different places. First, on contour edges whose endpoints are already connected through the exterior configuration, the parity constraint from~(C1) restricts the admissible red subsets to one side of a $\mathbb{Z}_2$ balance partition. Second, we price every red contour bond by the worst-case merge ratio $R_{\mathrm{merge}} = \frac{1-r}{2r}$, even though a red bond that only closes a cycle and does not merge grey components should carry the larger ratio $R_{\mathrm{cycle}} = \frac{1-r}{r}$. These two losses should be partially separable: the balance constraint acts on the cycle-type part of the contour, while the geometric factor is only paid by bonds that genuinely merge grey components. A sharper argument would encode each exterior grey component together with the red parity of its contour attachment points as a finite signed quotient, then count balanced red subgraphs with edge weights $R_{\mathrm{cycle}}$ on cycle-type edges and $R_{\mathrm{merge}}$ on genuinely merging edges. We do not pursue this refinement here.
\end{remark}

\begin{proof}[Proof of Theorem~\ref{thm:main}\textup{(a)}]
For $\beta < \beta_{\mathrm{sub}}$, Corollary~\ref{cor:grey-sub} gives no infinite grey cluster.
For $\beta > \beta_{\mathrm{grey}}$, Proposition~\ref{prop:grey-perc} gives at least one infinite grey cluster, and Corollary~\ref{cor:grey-unique} gives at most one.
\end{proof}

\section{A multicolour Burton--Keane argument for blue clusters}\label{sec:bk}

We first isolate the blue-cluster uniqueness mechanism as a statement about labelled random subgraphs.
The abstract argument uses only label-class coalescence and the mass-transport bound on the number of ends; the CMR-specific input is then confined to a finite-box resampling lemma that verifies coalescence for the overlap labels $q=\pm1$.

\subsection{Labelled subgraphs and label-class coalescence}\label{sec:label-coalescence}

Let $\mathcal{A}$ be a finite label set.
A \emph{labelled subgraph} of $\mathbb{Z}^d$ is a pair $(\ell,\omega)$ with vertex labels $\ell\in\mathcal{A}^V$ and edge indicators $\omega\in\{0,1\}^E$.
It is \emph{compatible} if
\[
  \omega_{\{x,y\}}=1 \quad\Longrightarrow\quad \ell_x=\ell_y.
\]
For $a\in\mathcal{A}$, let $G_a$ be the subgraph consisting of the vertices labelled $a$ and the open edges between them.

For a finite box $\Lambda\subset\mathbb{Z}^d$, define its outside vertex boundary by
\[
  \partial^+\Lambda
  :=\bigl\{y\in\Lambda^c: \{x,y\}\in E
        \text{ for some }x\in\Lambda\bigr\}.
\]
Let $G_a^{\Lambda^c}$ be the subgraph of $G_a$ induced by $\Lambda^c$, and let
\[
  \mathscr{G}_{\Lambda^c}
  :=\sigma\bigl(\ell_x:x\in\Lambda^c;\,
                 \omega_e:e\subset\Lambda^c\bigr)
\]
be the labelled exterior sigma-field.
Write $\mathsf{Coal}_a(\Lambda)$ for the event that all infinite components of $G_a^{\Lambda^c}$ meeting $\partial^+\Lambda$ belong to a single connected component of the full graph $G_a$.

\begin{definition}[Label-class coalescence]\label{def:label-coalescence}
The law of a compatible labelled subgraph has \emph{label-class coalescence} if, for every $a\in\mathcal{A}$ and every finite box $\Lambda$,
\[
  \mathbf{P}\bigl(\mathsf{Coal}_a(\Lambda)
       \mid\mathscr{G}_{\Lambda^c}\bigr)>0
  \qquad \mathbf{P}\text{-a.s.}
\]
Thus the property asks only for a positive-probability collective merger within one fixed label class.
It neither permits arbitrary single-edge insertions nor requires mergers between different labels.
\end{definition}

This is a label-restricted analogue of the merge tolerance of Halberstam and Hutchcroft~\cite[Lemma~3.4]{HH2023}, weaker in what it can merge: their condition can merge arbitrary augmented connectivity classes meeting a finite connected subgraph, whereas Definition~\ref{def:label-coalescence} concerns only components carrying one prescribed label.

\subsection{Abstract multicolour uniqueness proposition}\label{sec:abstract-bk}

\begin{proposition}[Label-class Burton--Keane]\label{prop:label-bk}
Let $(\ell,\omega)$ be an ergodic translation-invariant compatible labelled random subgraph of $\mathbb{Z}^d$ with finite label set $\mathcal{A}$.
If its law has label-class coalescence, then for every $a\in\mathcal{A}$ the graph $G_a$ has at most one infinite connected component almost surely.
Consequently the full open subgraph has at most $|\mathcal{A}|$ infinite connected components almost surely.
\end{proposition}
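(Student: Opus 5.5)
Fix a label $a\in\mathcal{A}$ and let $N_a\in\{0,1,2,\dots,\infty\}$ be the number of infinite components of $G_a$. The event $\{N_a=k\}$ is translation invariant, so by ergodicity $N_a$ is a.s.\ equal to a constant $k_a$. The plan is to rule out $2\le k_a<\infty$ by a finite-box merging argument, and to rule out $k_a=\infty$ by combining the same merging step with the mass-transport bound on ends~\cite[Proposition~3.3]{HH2023}. The final claim then follows immediately: compatibility forces every open edge to join two vertices with the same label, so each connected component of the full open subgraph lies inside a single $G_a$, and there are at most $\sum_a 1=|\mathcal{A}|$ infinite components.

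\emph{Finite case.} Suppose $2\le k_a<\infty$. Choose a box $\Lambda$ large enough that, with positive probability, at least two distinct infinite components of $G_a$ meet $\Lambda$. Call this event $F$. On $F$, removing $\Lambda$ leaves a finite collection of infinite components of $G_a^{\Lambda^c}$ meeting $\partial^+\Lambda$. These components are $\mathscr{G}_{\Lambda^c}$-measurable, and they belong to at least two distinct global infinite clusters.

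Now consider the event $F':=\{$at least two infinite components of $G_a^{\Lambda^c}$ meet $\partial^+\Lambda$, and no finite modification inside $\Lambda$ can change the number of infinite components away from $\Lambda\}$. More concretely, the number of infinite components of $G_a$ that avoid the infinite components of $G_a^{\Lambda^c}$ meeting $\partial^+\Lambda$ is $\mathscr{G}_{\Lambda^c}$-measurable.

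Applying label-class coalescence on $F$ gives $\mathbf{P}(\mathsf{Coal}_a(\Lambda)\cap F)>0$, since $F$ is essentially exterior-measurable after enlarging $\Lambda$. On $\mathsf{Coal}_a(\Lambda)$, all exterior infinite arms touching $\Lambda$ lie in a single component. The count of infinite components is then (exterior-measurable count away from $\Lambda$) plus $1$. On $F$, however, the same exterior data gives at least $2$ at the corresponding slot. I expect the cleanest formulation is this: the random variable $N_a - M$, where $M$ is the exterior-measurable number of infinite components not meeting $\partial^+\Lambda$, equals $k_a-M$ a.s. On $\mathsf{Coal}_a$ it is $1$, while on $F$ with non-coalesced interior it is at least $2$. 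Positivity of the conditional probability of $\mathsf{Coal}_a$ on the positive-probability exterior event where the arms come from $\ge2$ clusters contradicts $N_a$ being a.s.\ constant. A subtlety arises because $F$ depends on the interior. The fix is to replace $F$ by the exterior event $E_\Lambda$ that at least two infinite components of $G_a^{\Lambda^c}$ meet $\partial^+\Lambda$ and together with $M$ account for all $k_a$ clusters. Then $N_a\le M+1$ on $\mathsf{Coal}_a\cap E_\Lambda$, which contradicts $N_a=k_a$.

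\emph{Infinite case.} Suppose $k_a=\infty$. Take a large box $\Lambda$ such that, with positive probability, at least three infinite components of $G_a$ meet $\Lambda$. By coalescence, with positive probability all infinite exterior arms at $\Lambda$ merge into one component of $G_a$, and this component then has at least three ends. Here I expect the main obstacle: three distinct infinite clusters become three disjoint infinite arms of $G_a^{\Lambda^c}$, and after coalescence the merged component minus the finite set $\Lambda$ has at least three infinite pieces. The subgraph $G_a$ (vertices labelled $a$ together with open $a$-edges) is a translation-invariant random subgraph of $\mathbb{Z}^d$, so by~\cite[Proposition~3.3]{HH2023} every component has at most two ends a.s. This is a contradiction. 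One must check that the coalesced configuration is still a sample from the same translation-invariant law; it is, because we are only computing $\mathbf{P}(\mathsf{Coal}_a(\Lambda)\cap E'_\Lambda)>0$ under $\mathbf{P}$ itself, with $E'_\Lambda$ the exterior event of at least three arms. Hence $k_a\le1$.
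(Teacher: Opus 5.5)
Your proof follows the same route as the paper. It uses ergodic constancy of $N_a$, a box-coalescence contradiction when $2\le k_a<\infty$, and, when $k_a=\infty$, coalescence on the exterior event of at least three arms combined with the two-ends bound \cite[Proposition~3.3]{HH2023}. The infinite case and the final count over labels match the paper essentially verbatim.

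The finite case is right in substance but needs tidying. First, your intermediate claim $\mathbf{P}(\mathsf{Coal}_a(\Lambda)\cap F)>0$ is false. In fact $F\cap\mathsf{Coal}_a(\Lambda)=\emptyset$, because every infinite component of $G_a$ meeting $\Lambda$ contains an infinite exterior component meeting $\partial^+\Lambda$. Second, your final event $E_\Lambda$ is under-specified. The condition ``the arms together with the $M$ distant clusters account for all $k_a$ clusters'' is automatically true and does not force $M+1<k_a$: the two arms may already be joined through $\Lambda$ while $M=k_a-1$.

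The event you actually need is $\{M\le k_a-2\}$. It is $\mathscr{G}_{\Lambda^c}$-measurable because $k_a$ is deterministic. It contains $F$ up to a null set, since the clusters meeting $\Lambda$ are not among the $M$ and $N_a=k_a$ almost surely. On its intersection with $\mathsf{Coal}_a(\Lambda)$ one has $N_a\le M+1<k_a$, which gives the contradiction. This is exactly what your ``cleanest formulation'' sentence is reaching for.

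The paper avoids this bookkeeping by choosing $\Lambda$ to meet all $k_a$ clusters with positive probability. Then the exterior event $D_a(\Lambda)$ (an infinite exterior component exists and every one meets $\partial^+\Lambda$) has positive probability, and coalescence forces $N_a=1$ outright.
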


\begin{proof}
Fix $a\in\mathcal{A}$, and let $N_a$ be the number of infinite components of $G_a$.
Ergodicity makes $N_a$ an almost-sure constant in $\{0,1,2,\ldots,\infty\}$.
For a finite box $\Lambda$, let $A_{a,m}(\Lambda)$ be the event that $G_a^{\Lambda^c}$ has at least $m$ distinct infinite components meeting $\partial^+\Lambda$.
Let $D_a(\Lambda)$ be the event that $G_a^{\Lambda^c}$ has at least one infinite component and that every infinite component of $G_a^{\Lambda^c}$ meets $\partial^+\Lambda$.
Both events belong to $\mathscr{G}_{\Lambda^c}$.

Suppose first that $N_a=k$ for some $2\leq k<\infty$.
Almost surely some finite box meets all $k$ infinite components, so for some deterministic $\Lambda$ this occurs with positive probability.
After deleting that box, every infinite exterior component meets $\partial^+\Lambda$; hence $\mathbf{P}(D_a(\Lambda))>0$.
Label-class coalescence and exterior measurability give
\[
  \mathbf{P}\bigl(D_a(\Lambda)\cap
                   \mathsf{Coal}_a(\Lambda)\bigr)>0.
\]
On this event all infinite exterior components are joined into one component of $G_a$.
Because $\mathbb{Z}^d$ is locally finite, every infinite component left after changing finitely many vertices and incident edges contains an infinite component of the exterior graph.
Thus $N_a=1$ on this positive-probability event, contradicting the almost-sure constancy $N_a=k$.

Suppose instead that $N_a=\infty$.
Almost surely some finite box meets at least three infinite components, and therefore $\mathbf{P}(A_{a,3}(\Lambda))>0$ for some deterministic $\Lambda$.
Again by exterior measurability and label-class coalescence,
\[
  \mathbf{P}\bigl(A_{a,3}(\Lambda)\cap
                   \mathsf{Coal}_a(\Lambda)\bigr)>0.
\]
On this event a single component of $G_a$ contains three distinct infinite components of $G_a^{\Lambda^c}$.
Removing $\Lambda$ leaves these three components distinct and infinite, so the full component has at least three ends.
Viewing vertices with labels other than $a$ as isolated, $G_a$ is a translation-invariant random subgraph of $\mathbb{Z}^d$.
This contradicts~\cite[Proposition~3.3]{HH2023}, which states that every connected component of such a subgraph has at most two ends almost surely.

Hence $N_a\in\{0,1\}$.
Applying this conclusion to every $a\in\mathcal{A}$ proves the final assertion because compatibility prevents an open component from containing two labels.
\end{proof}

\subsection{CMR verification}\label{sec:cmr-coalescence}

We now return to the CMR joint law under the ergodic reduction following Theorem~\ref{thm:main}.
Take $\mathcal{A}=\{+1,-1\}$, set $\ell_x=q_x=\sigma_x\tau_x$, and let $\omega_e$ indicate that $e$ is blue.
The allowance rules in Definition~\ref{def:CMR} make this a compatible labelled subgraph.

\begin{remark}[Clock-state encoding]
Encoding $(+,+),(+,-),(-,-),(-,+)$ by $0,1,2,3\in\mathbb{Z}_4$, respectively, gives $q_x=(-1)^{z_x}$.
Thus the overlap is the parity projection of a four-state clock variable, but the argument uses only the resulting two-valued label map and no group structure.
\end{remark}

\begin{lemma}[CMR box filling]\label{lem:resample}
Fix $s\in\{\pm1\}$.
Let $\Lambda\subset\mathbb{Z}^d$ be a finite box, and let
\[
  E_\Lambda:=\{e\in E:e\cap\Lambda\neq\emptyset\}
\]
be the set of edges touching $\Lambda$, including its boundary edges.
Let $\mathscr{F}_{\Lambda^c}$ be the sigma-field generated by the couplings and bond variables on edges with both endpoints in $\Lambda^c$ and by the spins on $\Lambda^c$.
Let $B_s(\Lambda)$ be the event that:
\begin{enumerate}
\item[\textup{(i)}] every vertex of $\Lambda$ has overlap $q_i=s$;
\item[\textup{(ii)}] every edge with both endpoints in $\Lambda$ is blue;
\item[\textup{(iii)}] every boundary edge whose outside endpoint has overlap $s$ is blue;
\item[\textup{(iv)}] every boundary edge whose outside endpoint has overlap $-s$ is non-blue.
\end{enumerate}
Then
\[
  \mathbf{P}_\beta\bigl(B_s(\Lambda)
       \mid\mathscr{F}_{\Lambda^c}\bigr)>0
  \qquad \mathbf{P}_\beta\text{-a.s.}
\]
On $B_s(\Lambda)$, all exterior blue components in the overlap class $q=s$ that meet $\partial^+\Lambda$ are joined through $\Lambda$.
\end{lemma}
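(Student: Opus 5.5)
The strategy is to construct $B_s(\Lambda)$ in three successive conditional stages: first the spins in $\Lambda$, then the couplings on $E_\Lambda$, then the bonds on $E_\Lambda$. Each stage will have positive conditional probability given the earlier stages and $\mathscr{F}_{\Lambda^c}$. The tower property then gives $\mathbf{P}_\beta(B_s(\Lambda)\mid\mathscr{F}_{\Lambda^c})>0$ almost surely.

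\emph{Stage 1: spins.} Fix a target pair $(\sigma_\Lambda^*,\tau_\Lambda^*)$ with $\sigma^*_x\tau^*_x=s$ on $\Lambda$, for instance $\sigma^*\equiv+1$ and $\tau^*\equiv s$. I will show that $(\sigma_\Lambda,\tau_\Lambda)=(\sigma^*_\Lambda,\tau^*_\Lambda)$ has positive conditional probability given $\mathscr{F}_{\Lambda^c}$. The conditional law of the interior spins given $(J,\sigma_{\Lambda^c},\tau_{\Lambda^c})$ is the two-replica DLR kernel. By \eqref{eq:gibbs}, this kernel is bounded below by $2^{-2|\Lambda|}e^{-4\beta\sum_{e\in E_\Lambda}|J_e|}>0$. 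The bound depends on $J_{E_\Lambda}$, which is not in $\mathscr{F}_{\Lambda^c}$, but it is a.s.\ finite, so integrating over $J_{E_\Lambda}$ keeps the probability positive. This is the finite-box non-nullness mentioned in the remark after Proposition~\ref{prop:bayes}. Some care is needed to check that the spin DLR kernel survives conditioning on the exterior bonds $\eta_{e\subset\Lambda^c}$ as well. The CMR kernel factorizes edgewise given $(J,\sigma,\tau)$, and exterior edges do not involve interior spins, so the exterior bond factors cancel in the ratio.

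\emph{Stage 2: couplings.} Apply Proposition~\ref{prop:bayes} with $S=E_\Lambda$. Let $\mathcal{T}_S'$ be the sigma-field it provides, which contains all spins, including the target interior spins, together with $J_{S^c}$ and $\eta_{S^c}$. Once all spins are fixed, each edge $e=\{x,y\}\in E_\Lambda$ has a unique sign $\epsilon_e$ of $J_e$ for which $\sigma$ satisfies $e$.
\begin{itemize}
\item For interior edges, and for boundary edges whose outer endpoint $y$ has $q_y=s$, we have $q_x=q_y$. Then $\sigma_x\sigma_y=\tau_x\tau_y$, so choosing $\operatorname{sign}J_e=\epsilon_e$ makes both replicas satisfy $e$.
\item For boundary edges with $q_y=-s$, the bond can never be blue, whatever $J_e$ is.
\end{itemize}
Let $A=\{j\in\mathcal{J}^{S}:\operatorname{sign} j_e=\epsilon_e$ on the edges of the first kind$\}$. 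By (D1)--(D2), $\mathscr{P}^{\otimes S}(A)>0$, so \eqref{eq:bayes-positive} gives $\mathbf{P}_\beta(J_S\in A\mid\mathcal{T}_S')>0$.

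\emph{Stage 3: bonds.} Conditional on $(J,\sigma,\tau)$, the CMR kernel on $E_\Lambda$ is a product over edges of the normalized columns of the table in Definition~\ref{def:CMR}. For a doubly satisfied edge, the probability of blue is $1-r_e^2>0$, because $J_e\neq0$ a.s.\ by (D2). Conditions (ii) and (iii) therefore hold with probability $\prod(1-r_e^2)>0$, and condition (iv) holds automatically.

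Chaining the three stages by the tower property gives the lemma. The final joining claim is then immediate. On $B_s(\Lambda)$, $\Lambda$ spans a connected blue cluster with label $s$, since a box is connected. Every exterior $q=s$ component meeting $\partial^+\Lambda$ contains an outer endpoint $y$ with $q_y=s$, and by (iii) the boundary edge at $y$ is blue, so that component is attached to the cluster.

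I expect the main obstacle to be measure-theoretic, not combinatorial: justifying that the stagewise conditional statements hold in the infinite-volume limit. Concretely, this means showing that the spin DLR positivity holds for the interior spins given $\mathscr{F}_{\Lambda^c}$ when the couplings on $E_\Lambda$ are integrated out, and then matching the conditioning sigma-fields of Stage 1 and Stage 2. The first thing I would try is to prove the positive lower bound for the combined cylinder event on the torus, uniformly in $L$ after restricting $J_{E_\Lambda}$ to a bounded set. I would then pass to the local weak limit, as in the proof of Proposition~\ref{prop:bayes}.
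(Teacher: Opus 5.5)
Your proposal is correct and follows essentially the same route as the paper. It uses the same spin choice $\sigma\equiv+1$, $\tau\equiv s$ on $\Lambda$, made positive by finite-box non-nullness averaged over $J_{E_\Lambda}$; then Proposition~\ref{prop:bayes} with $S=E_\Lambda$, whose conditioning field is exactly $\mathscr{F}_{\Lambda^c}\vee\sigma(\sigma_\Lambda,\tau_\Lambda)$; then the conditionally independent CMR bond kernel, giving blue with probability $1-r_e^2$. The only difference is cosmetic: you leave the coupling signs free on boundary edges whose outer endpoint has overlap $-s$, whereas the paper prescribes $J_e\sigma_j>0$ on all boundary edges. Your remarks on the explicit kernel lower bound and on the cancellation of exterior bond factors spell out what the paper asserts in one line.
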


\begin{proof}
Fix the outside configuration in $\mathscr{F}_{\Lambda^c}$.
We build the desired event in three steps.

\medskip\noindent
\textbf{Step 1: prescribe the spins in $\Lambda$.}
Let
\[
  (\sigma_i,\tau_i)=
  \begin{cases}
    (+1,+1),&s=+1,\\
    (+1,-1),&s=-1,
  \end{cases}
  \qquad i\in\Lambda.
\]
Conditional on all couplings, the finite-box two-replica DLR kernel assigns strictly positive weight to every spin pattern.
Averaging over the unfrozen couplings on $E_\Lambda$ preserves strict positivity, so this event has positive conditional probability given $\mathscr{F}_{\Lambda^c}$.

\medskip\noindent
\textbf{Step 2: prescribe the couplings on $E_\Lambda$.}
On the event from Step~1, require
\begin{itemize}
\item $J_e>0$ for every edge $e$ with both endpoints in $\Lambda$;
\item $J_e\sigma_j>0$ for every boundary edge $e=\{i,j\}$ with $i\in\Lambda$ and $j\in\Lambda^c$.
\end{itemize}
This sign pattern has positive $\mathscr{P}^{\otimes E_\Lambda}$-measure by~\textup{(D2)}.
Once the spins and exterior configuration have been frozen, the conditioning field is exactly $\mathcal{T}'_{E_\Lambda}$ from Proposition~\ref{prop:bayes}.
That proposition therefore gives the prescribed sign pattern positive conditional probability.

\medskip\noindent
\textbf{Step 3: realize the bond states on $E_\Lambda$.}
Given the preceding spin and coupling choices, every interior edge is satisfied by both replicas.
A boundary edge is satisfied by both replicas exactly when its outside endpoint has overlap $s$; when that endpoint has overlap $-s$, exactly one replica satisfies it.
Thus every edge required to be blue in~\textup{(ii)}--\textup{(iii)} is blue with probability
$1-e^{-4\beta|J_e|}>0$, while the edges in~\textup{(iv)} are automatically non-blue.
The bond variables are conditionally independent given $(J,\sigma,\tau)$, and $E_\Lambda$ is finite, so the conjunction has positive conditional probability.

Successive conditioning through the three steps proves the displayed claim.
The joining statement follows from~\textup{(i)}--\textup{(iii)}.
\end{proof}

\begin{proposition}[CMR label-class coalescence]\label{prop:cmr-label-coalescence}
The labelled blue-subgraph factor $(q,\omega)$ of an ergodic component of an admissible CMR joint law is ergodic, translation-invariant, compatible, and satisfies label-class coalescence.
\end{proposition}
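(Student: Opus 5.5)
The plan is to check the four properties in order. The first three are structural; label-class coalescence is the substantive one, and it will follow from Lemma~\ref{lem:resample}.

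\emph{Structural properties.} The map $(J,\sigma,\tau,\eta)\mapsto(q,\omega)$ with $q_x=\sigma_x\tau_x$ and $\omega_e=\mathbf{1}[\eta_e=\text{blue}]$ is a local, translation-covariant factor map.
\begin{itemize}
\item \emph{Translation invariance.} The admissible joint law is a local weak limit of torus laws. The torus laws are translation invariant, so the limit is too. By the ergodic reduction after Theorem~\ref{thm:main}, we work with an ergodic component $\mathbf{P}_\beta$, and a covariant factor of an ergodic translation-invariant law is again ergodic and translation invariant.
\item \emph{Compatibility.} This is the overlap-alignment remark: a blue bond forces $q_x=q_y$, and this holds $\mathbf{P}_\beta$-a.s.\ because the allowance rule passes to weak limits as a closed local constraint.
\end{itemize}

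\emph{Label-class coalescence.} Fix $a=s\in\{\pm1\}$ and a finite box $\Lambda$. The labelled exterior field $\mathscr{G}_{\Lambda^c}$ is generated by $q_x$ for $x\in\Lambda^c$ and $\omega_e$ for $e\subset\Lambda^c$. These are measurable with respect to the field $\mathscr{F}_{\Lambda^c}$ of Lemma~\ref{lem:resample}, so $\mathscr{G}_{\Lambda^c}\subseteq\mathscr{F}_{\Lambda^c}$.
\begin{itemize}
\item \emph{Inclusion of events.} I would show $B_s(\Lambda)\subseteq\mathsf{Coal}_s(\Lambda)$. On $B_s(\Lambda)$, every vertex of $\Lambda$ carries label $s$ and the interior edges are blue, so $\Lambda$ is a connected blue set in $G_s$. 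Any infinite component of $G_s^{\Lambda^c}$ meeting $\partial^+\Lambda$ contains a vertex $y\in\partial^+\Lambda$ with $q_y=s$, and by (iii) the boundary edge from $y$ into $\Lambda$ is blue. Hence all such components lie in the single component of $G_s$ containing $\Lambda$.
\item \emph{Tower property.} Then
\[
\mathbf{P}_\beta(\mathsf{Coal}_s(\Lambda)\mid\mathscr{G}_{\Lambda^c})\geq \mathbf{E}_\beta\bigl[\mathbf{P}_\beta(B_s(\Lambda)\mid\mathscr{F}_{\Lambda^c})\bigm|\mathscr{G}_{\Lambda^c}\bigr].
\]
The inner conditional probability is a.s.\ strictly positive by Lemma~\ref{lem:resample}. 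The conditional expectation of an a.s.\ positive variable is a.s.\ positive: on the $\mathscr{G}_{\Lambda^c}$-measurable set where it vanishes, integrating shows the positive inner variable has zero integral there, so that set is null. This proves label-class coalescence.
\end{itemize}

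\emph{Main obstacle.} The step needing the most care is that Lemma~\ref{lem:resample} must hold in the ergodic component, not only in the original admissible law. Its inputs are the two-replica DLR non-nullness, the CMR bond kernel, and Proposition~\ref{prop:bayes}. The ergodic-reduction paragraph asserts that all three survive disintegration over the invariant $\sigma$-field, since that field is contained mod null sets in the relevant exterior fields. I would invoke that paragraph explicitly to justify applying Lemma~\ref{lem:resample} componentwise, and otherwise treat the argument as a direct verification.
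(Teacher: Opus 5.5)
Your proposal is correct and follows the paper's proof essentially step for step. The structural properties pass to the shift-equivariant factor, $\mathscr{G}_{\Lambda^c}\subseteq\mathscr{F}_{\Lambda^c}$ and $B_s(\Lambda)\subseteq\mathsf{Coal}_s(\Lambda)$, and the tower property combined with Lemma~\ref{lem:resample} gives coalescence. Your explicit checks (the event inclusion, the a.s.\ positivity of the conditional expectation of an a.s.\ positive variable, and applying Lemma~\ref{lem:resample} within the ergodic component) spell out points the paper leaves implicit.
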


\begin{proof}
Ergodicity and translation invariance pass to the shift-equivariant factor $(q,\omega)$, and compatibility was noted above.
Its exterior field is
\[
  \mathscr{G}_{\Lambda^c}
  =\sigma\bigl(q_x:x\in\Lambda^c;\,
               \mathbf{1}_{\{\eta_e=\mathrm{blue}\}}:e\subset\Lambda^c\bigr),
\]
which is contained in the richer joint field $\mathscr{F}_{\Lambda^c}$.
Moreover $B_s(\Lambda)\subseteq\mathsf{Coal}_s(\Lambda)$.
The tower property and Lemma~\ref{lem:resample} therefore give
\begin{align*}
  \mathbf{P}_\beta\bigl(\mathsf{Coal}_s(\Lambda)
      \mid\mathscr{G}_{\Lambda^c}\bigr)
  &\geq \mathbf{P}_\beta\bigl(B_s(\Lambda)
      \mid\mathscr{G}_{\Lambda^c}\bigr)\\
  &=\mathbf{E}_\beta\!\left[
      \mathbf{P}_\beta\bigl(B_s(\Lambda)
          \mid\mathscr{F}_{\Lambda^c}\bigr)
      \mathrel{\big|}\mathscr{G}_{\Lambda^c}\right]>0
\end{align*}
almost surely.
This is Definition~\ref{def:label-coalescence} for $s=\pm1$.
\end{proof}

\begin{figure}[H]
\centering
\begin{tikzpicture}[
    scale=0.82,
    bluebond/.style={draw=blue!70!black, line width=1.15pt},
    otherparity/.style={draw=blue!70!black, densely dotted, line width=1.05pt},
    resamplegrid/.style={draw=blue!45!black, line width=0.45pt},
    box/.style={draw=black, thick},
    vertex/.style={circle, fill=blue!70!black, inner sep=1.25pt},
    othervertex/.style={circle, draw=blue!70!black, fill=white, line width=0.8pt, inner sep=1.0pt},
    every node/.style={font=\small}
  ]

  \begin{scope}
    \node at (1.5,2.95) {three exterior $q=s$ components};
    \draw[box] (0,0) rectangle (3,2.2);
    \node at (1.5,1.1) {$\Lambda$};

    \draw[bluebond] (-1.25,2.75) -- (-0.4,2.45) -- (0.55,2.2);
    \draw[bluebond] (-1.25,-0.35) -- (-0.35,-0.15) -- (0.45,0);
    \draw[bluebond] (4.2,1.15) -- (3.5,1.2) -- (3,1.25);
    \draw[otherparity] (4.2,2.55) -- (3.45,2.35) -- (3,2.05);
    \draw[otherparity] (-1.75,1.58) -- (-0.82,1.48) -- (0,1.32);

    \node[vertex] at (0.55,2.2) {};
    \node[vertex] at (0.45,0) {};
    \node[vertex] at (3,1.25) {};
    \node[othervertex] at (3,2.05) {};
    \node[othervertex] at (0,1.32) {};

    \node[blue!70!black] at (-1.15,2.15) {$q=s$};
    \node[blue!70!black, fill=white, inner sep=0.7pt] at (-1.1,1.52) {$q=-s$};
    \node[blue!70!black] at (-1.1,0.35) {$q=s$};
    \node[blue!70!black] at (4.1,0.75) {$q=s$};
    \node[blue!70!black] at (4.1,2.2) {$q=-s$};

    \node[align=center, font=\scriptsize, text width=4.0cm] at (1.5,-0.72)
      {same-label components meet $\partial^+\Lambda$};
  \end{scope}

  \draw[-{Stealth[length=6pt]}, thick] (4.85,1.1) -- (6.35,1.1)
    node[midway, below, fill=white, inner sep=1pt] {$B_s(\Lambda)$};

  \begin{scope}[xshift=8.0cm]
    \node at (1.5,2.95) {box resampled};
    \draw[box] (0,0) rectangle (3,2.2);
    \draw[resamplegrid, step=0.55] (0.35,0.35) grid (2.65,1.85);
    \node[fill=white, inner sep=1pt] at (1.5,1.1) {$q=s$};
    \coordinate (merge) at (1.5,1.1);

    \draw[bluebond] (-1.25,2.75) -- (-0.4,2.45) -- (0.55,2.2);
    \draw[bluebond] (-1.25,-0.35) -- (-0.35,-0.15) -- (0.45,0);
    \draw[bluebond] (4.2,1.15) -- (3.5,1.2) -- (3,1.25);
    \draw[otherparity] (4.2,2.55) -- (3.45,2.35) -- (3,2.05);
    \draw[otherparity] (-1.75,1.58) -- (-0.82,1.48) -- (0,1.32);

    \draw[bluebond] (0.55,2.2) -- (0.55,1.85) -- (1.5,1.85) -- (merge);
    \draw[bluebond] (0.45,0) -- (0.45,0.35) -- (1.5,0.35) -- (merge);
    \draw[bluebond] (3,1.25) -- (2.65,1.25) -- (merge);

    \node[vertex] at (0.55,2.2) {};
    \node[vertex] at (0.45,0) {};
    \node[vertex] at (3,1.25) {};
    \node[othervertex] at (3,2.05) {};
    \node[othervertex] at (0,1.32) {};
    \node[vertex] at (merge) {};

    \node[align=center, font=\scriptsize, text width=4.0cm] at (1.5,-0.72)
      {all incident $q=s$ arms are merged};
  \end{scope}
\end{tikzpicture}
\caption{CMR box filling verifies label-class coalescence. Given the exterior configuration, the event $B_s(\Lambda)$ has positive conditional probability under the richer joint exterior field and joins every incident exterior blue component carrying the label $q=s$. The dotted blue arms carry the opposite label and are left unmerged.}
\label{fig:box-resampling}
\end{figure}

\begin{remark}
The box-filling construction is intentionally stronger than Definition~\ref{def:label-coalescence}: it joins every incident exterior component with label $s$, not only the infinite ones.
It is also not optimized, since all spins in $\Lambda$ and all coupling signs on $E_\Lambda$ are prescribed.
Only strict conditional positivity is needed.
\end{remark}

\begin{proof}[Proof of Theorem~\ref{thm:main}\textup{(b)}]
Under the ergodic reduction, Proposition~\ref{prop:cmr-label-coalescence} verifies the hypotheses of Proposition~\ref{prop:label-bk} for the two labels $q=+1$ and $q=-1$.
Thus there is at most one infinite blue cluster in each overlap class and at most two in total.

If two distinct infinite blue clusters exist, Corollary~\ref{cor:grey-unique} implies that they lie in the same infinite grey cluster.
Within that grey cluster, the two path-parity classes coincide under the spin lift with the overlap classes $q=+1$ and $q=-1$.
The two infinite blue clusters must therefore lie in opposite path-parity classes; equivalently, every grey path between them crosses an odd number of red bonds.

Let $A$ be the event that there are at most two infinite blue clusters and that, whenever two exist, they lie in the same grey cluster and every grey path between them crosses an odd number of red bonds.
This event is measurable from the bond configuration alone, since grey connectivity and red-bond path parity are bond-measurable by Remark~\ref{rem:parity-marginal}.
Hence the conclusion also holds under the bond marginal of $\mathbf{P}_\beta$.
The general translation-invariant case follows by integrating over the ergodic components.
\end{proof}

\subsection{Further consequences}\label{sec:bk-consequences}

Fix $\beta > 0$ and let $\mathbf{P}_\beta$ be as in Theorem~\ref{thm:main}.

\begin{corollary}[Replica-symmetric ergodic case]\label{cor:replica-symmetric}
Assume in addition that $\mathbf{P}_\beta$ is ergodic and invariant under the global replica flip
\[
  (J,\sigma,\tau,\eta) \longmapsto (J,\sigma,-\tau,\eta).
\]
Then:
\begin{enumerate}
\item[\textup{(i)}] $G_{\mathrm{blue}}$ contains either $0$ or $2$ infinite blue clusters $\mathbf{P}_\beta$-a.s.
\item[\textup{(ii)}] If $\rho_+^\infty$ and $\rho_-^\infty$ denote the spatial densities of the infinite blue clusters in the overlap classes $q=+1$ and $q=-1$, respectively, then
\[
  \rho_+^\infty = \rho_-^\infty
  \qquad
  \mathbf{P}_\beta\text{-a.s.}
\]
\end{enumerate}
\end{corollary}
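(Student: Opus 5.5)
The plan is to combine the per-class bound from Proposition~\ref{prop:label-bk} (via Proposition~\ref{prop:cmr-label-coalescence}) with a symmetry argument. Write $N_\pm$ for the number of infinite blue clusters in the overlap class $q=\pm1$. These are translation-invariant random variables, so ergodicity makes them almost-sure constants. Proposition~\ref{prop:label-bk} then gives $N_\pm\in\{0,1\}$.

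The key observation is that the global replica flip $\tau\mapsto-\tau$ fixes $J$, $\sigma$ and $\eta$, and sends $q_x=\sigma_x\tau_x$ to $-q_x$ at every vertex. The blue subgraph is therefore unchanged, and each blue cluster keeps its vertex set; only its overlap label switches sign. Hence under the flip map $N_+\circ\Phi=N_-$, where $\Phi$ denotes the flip. Invariance of $\mathbf{P}_\beta$ under $\Phi$ shows that $N_+$ and $N_-$ have the same law. Since both are almost-sure constants, $N_+=N_-$ almost surely. The total number of infinite blue clusters is $N_++N_-$, because each blue cluster lies in a single class. It therefore equals $2N_+\in\{0,2\}$, which proves~(i).

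For~(ii), define $\rho_\pm^\infty$ as the almost-sure limits of $|\Lambda_n|^{-1}\#\{x\in\Lambda_n: q_x=\pm1,\ x\text{ in an infinite blue cluster}\}$. The ergodic theorem applied to the indicator $\mathbf{1}\{q_0=\pm1,\ 0\leftrightarrow\infty \text{ in } G_{\mathrm{blue}}\}$ makes these limits exist and equal the constants $\mathbf{P}_\beta(q_0=\pm1,\ 0\leftrightarrow\infty)$. The flip $\Phi$ interchanges the two events, because blue connectivity of $0$ to infinity is $\eta$-measurable and is therefore preserved, while $q_0$ changes sign. Invariance then gives equality of the two probabilities, hence $\rho_+^\infty=\rho_-^\infty$ almost surely.

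The only point that needs care is that the labelled factor $(q,\omega)$ transforms correctly under $\Phi$ and that the ergodicity and invariance hypotheses are used jointly. The flip commutes with translations, so $N_\pm$ remain translation-invariant observables, and no separate ergodic decomposition is needed because the corollary assumes ergodicity outright. I expect no real obstacle beyond this bookkeeping. The one subtle step is deducing $N_+=N_-$ almost surely from equality in law, which relies on both being almost-sure constants under the same ergodic measure.
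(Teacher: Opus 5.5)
Your proposal is correct and matches the paper's proof essentially step for step. Ergodicity makes $N_\pm$ almost-surely constant, the per-class bound gives $N_\pm\in\{0,1\}$, and the replica flip swaps $N_+$ and $N_-$, forcing them to be equal. For (ii), the pointwise ergodic theorem identifies each density with an origin probability, which flip-invariance equates; your indicator $\mathbf{1}\{q_0=s,\ 0\leftrightarrow\infty\}$ coincides with the paper's $I_s(0)$ because blue clusters carry a single label.
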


\begin{proof}
Let $N_+$ and $N_-$ denote the numbers of infinite blue clusters in the overlap classes $q=+1$ and $q=-1$, respectively.
By Theorem~\ref{thm:main}\textup{(b)}, $N_+,N_- \in \{0,1\}$ $\mathbf{P}_\beta$-a.s.
Since both are translation-invariant and $\mathbf{P}_\beta$ is ergodic, they are almost surely constant.
The replica-flip map swaps the $q=+1$ and $q=-1$ classes while preserving the bond configuration, so it swaps $N_+$ and $N_-$.
By invariance, $N_+$ and $N_-$ have the same law, hence the same almost-sure constant value.
Therefore $N_+ = N_-$ and the total number of infinite blue clusters is either $0$ or $2$, proving~\textup{(i)}.

For $s \in \{\pm1\}$, let $I_s(x)$ be the indicator that $x$ belongs to the infinite blue cluster in the overlap class $q=s$ (with $I_s \equiv 0$ if no such cluster exists).
By Theorem~\ref{thm:main}\textup{(b)}, there is at most one such cluster, so $I_s(x)$ is well-defined.
Set
\[
  \theta_s := \mathbf{P}_\beta\bigl(0 \text{ belongs to the infinite blue cluster in the class } q=s\bigr)
  = \mathbb{E}_{\mathbf{P}_\beta}[I_s(0)].
\]
By the pointwise ergodic theorem, for the boxes $\Lambda_n := [-n,n]^d \cap \mathbb{Z}^d$,
\[
  \rho_s^\infty
  := \lim_{n\to\infty} \frac{1}{|\Lambda_n|}\sum_{x\in\Lambda_n} I_s(x)
  = \theta_s
  \qquad
  \mathbf{P}_\beta\text{-a.s.}
\]
The replica-flip map swaps $I_+$ and $I_-$, so by invariance $\theta_+ = \theta_-$.
Hence $\rho_+^\infty = \rho_-^\infty$ $\mathbf{P}_\beta$-a.s., proving~\textup{(ii)}.
\end{proof}

\begin{remark}
The replica-flip invariance alone already implies $\theta_+ = \theta_-$, where $\theta_s := \mathbf{P}_\beta(0 \in \mathcal{C}_s^\infty)$ and $\mathcal{C}_s^\infty$ denotes the infinite blue cluster in the class $q=s$ when it exists. Ergodicity is used only to identify these origin probabilities with the almost-sure spatial densities $\rho_s^\infty$ via the pointwise ergodic theorem, and likewise to rule out the possibility of exactly one infinite blue cluster by forcing $N_+$ and $N_-$ to be almost surely constant.
\end{remark}

Since the configuration space is a standard Borel product space and the joint measure $\mathbf{P}_\beta$ has marginal $\nu$ on the couplings $J$, it admits a disintegration
\[
  \mathbf{P}_\beta = \int \nu(\mathrm{d}J)\, \mathbf{P}_{\beta,J},
\]
where $\mathbf{P}_{\beta,J}$ is the conditional measure on $(\sigma, \tau, \text{bonds})$ given $J$.
The following quenched consequence of Theorem~\ref{thm:main} is then immediate by disintegration, equivalently by the tower property for conditional expectation.

\begin{corollary}[Quenched cluster bound]\label{cor:quenched}
For $\nu$-almost every coupling realization $J$, the selected quenched bond marginal $\zeta_{\beta,J}$ supports at most two infinite blue-bond clusters almost surely.
If two distinct infinite blue clusters exist, then they lie in the same grey cluster and every grey path between them crosses an odd number of red bonds.
\end{corollary}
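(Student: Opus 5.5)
The plan is to deduce Corollary~\ref{cor:quenched} from Theorem~\ref{thm:main}\textup{(b)} by disintegrating $\mathbf{P}_\beta$ over the disorder. Let $A$ be the event used at the end of the proof of Theorem~\ref{thm:main}\textup{(b)}: there are at most two infinite blue clusters, and whenever two exist they lie in a common grey cluster and every grey path between them crosses an odd number of red bonds. I first check that $A$ depends only on $\eta$ and is measurable in the product $\sigma$-field on $\{\text{blue},\text{red},\text{closed}\}^E$. The number of infinite blue clusters is a countable combination of cylinder events of the form ``$x$ is blue-connected to $\partial\Lambda_n$'' and ``$x$ and $y$ are blue-connected within $\Lambda_n$'' over $x,y\in V$ and $n\in\mathbb{N}$. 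Grey connectivity is handled the same way. Red-bond path parity is well defined and bond-measurable by Remark~\ref{rem:parity-marginal}, and it can be read off from any one finite grey path, so it is also a countable union of cylinder events.

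Next I invoke Theorem~\ref{thm:main}\textup{(b)} in the form given at the end of its proof: $\mathbf{P}_\beta(\eta\in A)=1$ for the full admissible measure, not just for an ergodic component, since the general case follows by integrating over ergodic components. Using the disintegration $\mathbf{P}_\beta=\int\nu(\mathrm{d}J)\,\mathbf{P}_{\beta,J}$,
\[
1=\mathbf{P}_\beta(\eta\in A)=\int \nu(\mathrm{d}J)\,\mathbf{P}_{\beta,J}(\eta\in A)=\int\nu(\mathrm{d}J)\,\zeta_{\beta,J}(A).
\]
Since $0\le\zeta_{\beta,J}(A)\le1$, this forces $\zeta_{\beta,J}(A)=1$ for $\nu$-a.e.\ $J$. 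Equivalently, by the tower property, $\mathbf{E}_\beta[\mathbf{1}_A\mid J]=1$ almost surely. This is exactly the statement of Corollary~\ref{cor:quenched}.

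The only point that needs care is the existence and measurability of the regular conditional law $J\mapsto\mathbf{P}_{\beta,J}$, so that $J\mapsto\zeta_{\beta,J}(A)$ is measurable and the integral identity holds. This follows because $\Omega$ is a standard Borel product space, as already noted before the corollary. The resulting exceptional $\nu$-null set of $J$ is allowed by the statement. No fixed-$J$ claim is made for every realization, matching the caveat in Proposition~\ref{prop:grey-perc}. I expect no real obstacle; the main content is the measurability check on $A$.
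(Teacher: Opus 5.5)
Your proposal is correct and follows essentially the same route as the paper. Both take the bond-measurable event $A$ from the end of the proof of Theorem~\ref{thm:main}\textup{(b)}, use $\mathbf{P}_\beta(A)=1$ together with the disintegration $\mathbf{P}_\beta=\int\nu(\mathrm{d}J)\,\mathbf{P}_{\beta,J}$ to get $\mathbf{P}_{\beta,J}(A)=1$ for $\nu$-a.e.\ $J$, and then pass to the bond marginal $\zeta_{\beta,J}$; your cylinder-event measurability check on $A$ just spells out what the paper takes from Remark~\ref{rem:parity-marginal}.
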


\begin{proof}
Let $A$ be the bond-measurable event appearing in Theorem~\ref{thm:main}\textup{(b)}.
It is measurable with respect to the bond configuration alone, since grey connectivity and red-bond path parity are determined in the bond marginal (Remark~\ref{rem:parity-marginal}).
Theorem~\ref{thm:main} gives $\mathbf{P}_\beta(A) = 1$.
Disintegrating over the disorder,
\[
  1 = \mathbf{P}_\beta(A) = \int \mathbf{P}_{\beta,J}(A)\, \nu(\mathrm{d}J),
\]
so $\mathbf{P}_{\beta,J}(A) = 1$ for $\nu$-a.e.\ $J$, and the same holds under its bond marginal $\zeta_{\beta,J}$.
  In particular, the bound applies to the conditional quenched states obtained by disintegrating any admissible joint measure.
\end{proof}

\section{Discussion}\label{sec:discussion}

Machta, Newman, and Stein~\cite{MNS2007} argue on the basis of mean-field calculations and numerics that the spin glass transition is accompanied by the appearance of exactly two giant blue clusters of unequal density.
M\"unster and Weigel~\cite{MW2023,MW2025} observe two dominant Chayes--Machta--Redner--J\"org (CMRJ) clusters in two- and three-dimensional short-range Ising spin-glass models, with density difference approaching the overlap $|q|$ at low temperature.
For two replicas, their CMRJ clusters coincide with the blue clusters considered here.
Theorem~\ref{thm:main} provides a rigorous upper bound consistent with these observations: the number of infinite blue clusters cannot exceed two, and if two exist then they lie in the common infinite grey cluster and occupy opposite path-parity classes there.

Part~(a) establishes a grey-bond percolation transition, but this does not by itself imply the existence of infinite blue clusters or a spin glass phase transition---the infinite grey cluster could in principle be sustained entirely by red bonds.
Theorem~\ref{thm:main} does not prove the existence of any infinite blue cluster, the existence of a spin glass transition, or any property of the critical point (value of $T_c$, sharpness, critical exponents).

Turning the percolation picture into an identity between the overlap order parameter and blue-cluster densities requires two additional inputs that are not provided here.
First, the overlap order parameter $q = D_+ - D_-$ (where $D_\pm$ is the density of sites with $q_i = \pm 1$) vanishes almost surely under any translation-invariant ergodic joint Gibbs measure that is invariant under $(\sigma, \tau) \mapsto (\sigma, -\tau)$: translation ergodicity forces $D_\pm$ to be a.s.\ constant, the $\mathbb{Z}_2$ symmetry swaps them, so $D_+ = D_- = \tfrac{1}{2}$.
Thus a nontrivial density interpretation of $q$ necessarily requires symmetry-broken (extremal) measures, which our framework does not directly construct.
Second, even in such measures, one would still need to show that finite blue clusters inside the infinite grey cluster do not contribute to the overlap density.
This is exactly the point flagged by Machta, Newman, and Stein~\cite[p.~120]{MNS2007} as ``not immediate'' in short-range models.
Unlike the ferromagnetic FK representation, finite CMR blue clusters do not receive independent random spin labels, so the usual FK cancellation mechanism is unavailable.

Theorem~\ref{thm:main} applies to the periodic-torus joint limits defined in~\eqref{eq:torus-joint-law}, and Corollary~\ref{cor:quenched} covers the conditional quenched laws obtained by disintegrating such a limit.
It does not claim to cover every translation-invariant Gibbs state.
The abstract Proposition~\ref{prop:label-bk} is not specific to CMR: it applies to any ergodic translation-invariant compatible labelled subgraph with label-class coalescence.
The CMR verification extends to other translation-invariant joint CMR laws that satisfy the finite-coupling equivalence of Proposition~\ref{prop:bayes}, finite-box non-nullness of the two-replica spin specification, and the same local CMR bond kernel.

The theorem isolates the part of the CMR picture that is purely structural.
If infinite blue clusters occur in a translation-invariant CMR state, their multiplicity is forced by the overlap parity: at most one can live in each of the two parity classes.
In this sense the two-cluster scenario is not merely a mean-field or numerical artifact: when two infinite blue clusters occur, opposite overlap parity is the only possible multiplicity pattern compatible with the CMR constraints.

The remaining obstruction is not multiplicity, but density.
To identify the spin overlap with a difference of blue-cluster densities, one still has to show that finite blue clusters do not contribute to the overlap density, or else work in a setting where this contribution can be controlled.
This is the short-range difficulty emphasized by Machta, Newman, and Stein, and it remains the natural next problem after the upper-bound result proved here.

\section*{Acknowledgments}
The author thanks Charles M. Newman, Alberto Gandolfi, Daniel L. Stein, and Thomas Hutchcroft for helpful comments on this note.

\bibliographystyle{unsrt}
\bibliography{refs}

\end{document}